\documentclass[a4paper,UKenglish]{lipics-v2016}
%This is a template for producing LIPIcs articles.
%See lipics-manual.pdf for further information.
%for A4 paper format use option "a4paper", for US-letter use option "letterpaper"
%for british hyphenation rules use option "UKenglish", for american hyphenation rules use option "USenglish"
% for section-numbered lemmas etc., use "numberwithinsect"

\usepackage{microtype}%if unwanted, comment out or use option "draft"
%\usepackage{color}
%\graphicspath{{./graphics/}}%helpful if your graphic files are in another directory

\bibliographystyle{plainurl}% the recommended bibstyle

%\theoremstyle{plain}
%\newtheorem{proposition}[theorem]{Proposition}

%%%%%%%%%%%%%
\usepackage{epsfig}
\usepackage{amssymb}
\usepackage{amsmath}
\usepackage{ amsfonts} % per \Box
\usepackage{stmaryrd}
\usepackage{mathrsfs}
\usepackage{latexsym}
\usepackage{hhline}
\usepackage{gastex}
\usepackage{graphicx}
%%%%%%%%%%%%%%%%%%%% per arxiv
% \usepackage{ifpdf}
%    \ifpdf
%      do something pdflatex specific here
%    \else
%      do something for regular latex or pdflatex in dvi mode
%    \fi

%%%%  APICI  ``
\title{ { Construction of  Non-expandable Non-overlapping Sets of Pictures}
\footnote{Partially supported by MIUR Projects
{\it ``Automata and Formal Languages: Mathematical and Applicative Aspects''} and {\it ``PRISMA PON04a2 A/F''}, and
by %$60 \%$
FARB Projects of
Universities of Catania, Roma ``Tor Vergata'', Salerno.
 }
}

\titlerunning{Construction of non-expandable non-overlapping sets of pictures}

\author[1]{Marcella Anselmo}
\author[2]{Dora Giammarresi}
\author[3]{ Maria Madonia}
 % Antonio~Restivo\inst{4}

\affil[1]{ Dipartimento di Informatica,
Universit\`a di Salerno, Via Giovanni Paolo II, 132-84084 Fisciano (SA) Italy. E-mail: {\tt
anselmo@dia.unisa.it}}
\affil[2]{  Dipartimento di Matematica.
 Universit\`a  Roma ``Tor Vergata'',
  via della Ricerca Scientifica, 00133 Roma, Italy. E-mail:
{\tt giammarr@mat.uniroma2.it}}
\affil[2]{Dipartimento di Matematica e
Informatica, Universit\`a  di
  Catania, Viale Andrea Doria 6/a,  95125 Catania, Italy.
  E-mail: {\tt madonia@dmi.unict.it}}

\authorrunning{M. Anselmo, D. Giammarresi and M. Madonia}

\Copyright{M. Anselmo, D. Giammarresi and M. Madonia}%mandatory, please use full first names. LIPIcs license is "CC-BY";  http://creativecommons.org/licenses/by/3.0/

\subjclass{F.4.3 Formal Languages, G.2.1 Combinatorics}% mandatory: Please choose ACM 1998 classifications from http://www.acm.org/about/class/ccs98-html . E.g., cite as "F.1.1 Models of Computation".
\keywords{Cross-bifix-free sets of strings, Non-overlapping sets, Unbordered pictures}% mandatory: Please provide 1-5 keywords
% Author macros::end %%%%%%%%%%%%%%%%%%%%%%%%%%%%%%%%%%%%%%%%%%%%%%%%%

%\date{}     % with no \date command the current date is generated
\hyphenation{two-di-men-sio-nal Two-di-men-sio-nal}
\hyphenation{de-ter-mi-ni-stic}
\hyphenation{non-de-ter-mi-ni-stic}
\hyphenation{lan-gua-ge}
\hyphenation{ge-ne-ra-li-za-tion}
\hyphenation{one-di-men-sio-nal}
\hyphenation{qua-dru-ple}
\hyphenation{con-straints}
\hyphenation{clo-su-re}
\hyphenation{ro-ta-tion}
\hyphenation{dia-go-nal}
\hyphenation{unam-bi-gui-ty}
\hyphenation{si-mi-lar}
\hyphenation{ar-gu-ment}

\begin{document}

\thispagestyle{empty}

 \maketitle

\begin{abstract}
%Cross-bifix-free sets of strings are sets of strings such that no two strings in the set overlap.
The non-overlapping sets of pictures are sets such that no two pictures in the set (properly) overlap. They are the generalization  to two dimensions of the cross-bifix-free sets of strings.
Non-overlapping sets of pictures are non-expandable when no other picture can be added without violating the property.
We present a construction of non-expandable non-overlapping (NENO) sets of pictures and show some examples of application.
\\
%{\bf Keywords:} {Cross-bifix-free sets of strings, Non-overlapping sets, Unbordered pictures.}
\end{abstract}

 \section{Introduction}\label{s-intro}

The digital technology that pervades every aspect of our lives is bringing communications more and more towards pictorial (two-dimensional) environments. The generalization to two dimensions of the formal study of all  structures and special patterns of the strings is then gaining a growing  interest in the scientific community.
The two-dimensional strings are called \emph{pictures} and they are represented by two-dimensional (rectangular)  arrays over a finite alphabet $\Sigma$. The set of all pictures over $\Sigma$ is usually denoted by $\Sigma^{**}$.
%: a two-dimensional language is thus a subset of $\Sigma^{**}$.
Extending results from the formal (string) language theory to two dimensions is   a very  challenging task. The two-dimensional structure in fact imposes some intrinsic difficulties  even in the basic concepts. For example,  we can define  two concatenation operations (horizontal and vertical concatenations) between two pictures, but they are only partial operations and do not induce a monoid structure to the set $\Sigma^{**}$. Moreover, for example, the definition of ``prefix" of a string can be extended to a picture by considering its rectangular portion in the top-left corner; nevertheless, if one deletes a prefix from a picture, the remaining part is not a picture anymore.

%During the last fifty years, and still intensively nowadays,
Several results from string language theory have been worthy extended to pictures. Many researchers
have investigated how the notion of recognizability by finite state automata
  can be  transferred to two dimensions to accept picture languages (\cite{AGM-Fund10,AGMR,AGM-IJFCS14,BH67,GR92,GR-book}).

A relevant notion on strings is the one of "border".
  Given a string $s$, a \emph{bifix} or a \emph{border} of $s$ is a  substring $x$ that is both prefix and suffix of $s$.  A string $s$ is \emph{bifix-free} or \emph{unbordered} if it has no other bifixes,  besides the empty string and $s$ itself.
Bifix-free strings are  strictly related  with the theory of codes \cite{BPR} and are involved in the data structures for pattern matching algorithms \cite{CR02,Gus97}.  From a more  applicative point of view, bifix-free strings are suitable as
 synchronization patterns in digital
communications and similar communication protocols \cite{N73}.
In this framework, the \emph{cross-bifix-free codes} have been introduced in \cite{Bajic07}.
A set of strings $X$  is  a cross-bifix-free code
when no
prefix of any string is the suffix of any other string in $X$; %Codes of this types arise in the study of distributed sequences for frame synchronization.
it is {\em non-expandable} if
 no other element can be added to $X$ without falsifying the property of the set.
Several efforts have been made in the last years in order to construct families of non-expandable  cross-bifix-free codes.
A first family has been exhibited in \cite{Bajic14}. Subsequently, other families based on different approaches have been shown in \cite{BPP12,CKPW13},
with the aim of finding codes of bigger cardinality.

% Constructive methods for (non-expandable ???) cross-bifix-free codes were  investigated for example (???) in \cite{Bajic14,BPP12,CKPW13}.

%In this paper, we extend these notions and results to a two-dimensional setting.
%AGGIUNTA
%Note that t

%, and nowadays this is a research field of great interest.
%AGGIUNTA

%In this paper we  investigate  the notion of \emph{unbordered} picture that is somehow connected both to picture codes and to two-dimensional pattern matching.
The notion of border extends very naturally from strings to pictures since it is not related to any scanning direction.
 Informally, we can say that a picture $p$ is \emph{bordered}
%(((if we can find the same rectangular portion at two opposite corners.)))
%FORSE LA DEF DATA CON OVERLAP HA PIU SENSO QUI IN RELAZIONE AL PATTERN MATCHING E CODES:
if a copy $p'$ of $p$ can be   overlapped on $p$  by putting a corner of $p'$ somewhere on some position in $p$.
%The border of $p$ will be the subpicture corresponding to the portion where $p$ and $p'$ match.
 Observe that, depending on the position of the corner of $p'$, several   different types of picture overlaps are possible and some of them can be studied by reducing them to the string case.
%(For example  if the matching is checked only by  sliding the two picture copies with a horizontal or a vertical move). DDD accorciato un pochino
%%Notice that this case is not really interesting, since pictures can be handled as they were thick strings  on the alphabet either  of the columns or of the rows. Another special   is taking square pictures and allow only overlaps that put a corner of $p'$ on positions of the diagonals of $p$. This corresponds to  consider only square borders as defined in %  Crochemore \textit{et al} paper
%(\cite{CroIK98}).
%%Also in this special case  some properties of string borders still hold for pictures. ZUPPA TROPPO LUNGA???? DIREI DI SI PERCHE' IL PAPER NON E' SUI BORDI... LO DIREI RIFERITO AGLI OVERLAP!
 We stay in the  general situation when the overlaps can be made on any position in $p$  and therefore the borders can be of any size. This leads to a quite different scenario with respect to the string case. In fact, we have two pairs of symmetric cases; either $p$ can be overlapped by putting its top-left corner in a copy $p'$ (i.e. the bottom-right corner of $p'$ is inside $p$) or we can put the bottom-left corner of $p$ somewhere in some positions of its copy $p'$  (i.e. the top-right corner of $p'$ is inside $p$).
 Unbordered pictures, in this general setting, were first investigated in \cite{AGM-CAI15}, where an algorithm for the  construction of the set of all unbordered pictures of a fixed size is proposed.

 Moving from pictures to sets of pictures,  we consider the notion of \textit{non-overlapping set of pictures}.  In a non-overlapping set of pictures, each picture is unbordered and moreover no picture can be overlapped on another one of the same set. This notion gives therefore the generalization of cross-bifix-free code of strings and provides families of picture codes. An example of a non-overlapping set of pictures was recently given in \cite{BBP16};   its cardinality is calculated using generalized Fibonacci sequences. Actually, it is quite easy to find "small" non-overlapping sets of pictures of a fixed size $(m,n)$; the challenge is, of course, to find "big" ones! More precisely, the aim is to find "big" non-overlapping sets of pictures which cannot be expanded  and, at the same time, to provide a description for any size of the pictures.
  In the same paper \cite{BBP16}, it is left as main open problem whether it is possible to construct a \emph{non-expandable non-overlapping} set of pictures (we will call it a NENO set, for short). A solution to this problem is a generalization to two dimensions of the above cited results on non-expandable cross-bifix-free sets of strings. %  in \cite{,,,,,}.
%Recall that a set $S$ is non-expandable if no other element can be added to $S$ without falsifying the property of the set.
%DOBBIAMO DIRLO CHE LO ABBIAMO RISOLTO, NO?
It constitutes the main contribution of this paper.

% In this paper we prove some necessary and sufficient conditions (VERAMENTE??? SE COSI' DOBBIAMO ENFATIZZARLO ANCHE INSIDE) for a set of picture to be non-expandable non-overlapping (NENO). Such conditions involve the strings on the frame of the pictures itself. As a main result (???), we provide an example (OPPURE una famiglia di NENO sets???) of a NENO set, by proving that it satisfies all the sufficient conditions. We then  calculate some bounds for its cardinality.
%QUALCHE PAROLA IN PIU' SU QUELLO CHE ABBIAMO FATTO???

First, we show some necessary conditions which are satisfied by any non-overlapping set of pictures.
Subsequently, we identify some conditions which ensure that a set is non-overlapping non-expandable.
The focus is firstly on the properties that the frames of the pictures may have, and then on the internal part of the pictures.
This approach provides a construction in two main steps.
As an application we exhibit a family of NENO sets $Y(m,n)$, for any $m,n\geq 4$.
It is the first example in the literature.

%
% The paper is organized as follows: LA VGLIAMO??Section \ref{s-preli}  reports the recursive construction of bifix-free strings given by Nielsen in \cite{N73},  together with all the needed notations and definitions on pictures. In Section \ref{s-unbordered} the notion of unbordered picture is introduced as two-dimensional extension from the string case. Some related properties are stated together with some examples. Section \ref{s-constr} contains the recursive construction for ..... Some conclusions  are given in Section \ref{s-conclu}.
%

\section{Preliminaries}\label{s-preli}
%{\bf Sono quelli di CAI15}

In this section  %we first report the formal definition of unbordered strings together with  their recursive construction given by Nielsen.
%, that we later generalize to pictures.
%Then,
we recall all the definitions on strings and pictures needed in the rest of the paper. For more details see \cite{BPR} and \cite{GR-book}.

\subsection{Basic notations on strings}\label{ss-string}
%: DECIDERE PER TUTTA LA SEZIONE SERVONO TUTTE?
%.....:  refer to \cite{} for  all details.
%formal notations, definitions and properties.
A string is a sequence of zero or more symbols
from an alphabet $\Sigma$.
%
%The set of all strings over the alphabet $\Sigma$ is denoted by $\Sigma^*$. A nonempty string $s$ of length $n$ is represented by
Let $s=s_1s_2 \ldots s_n$ be a string of length $n$ over $\Sigma$. The length of $s$ is denoted $|s|$.
%, while $|s|_a$, for $a\in\Sigma$, denotes the number of occurrences of the letter $a$ in $s$.
%$s_i \in \Sigma$ for $1 \leq i \leq n$. The
%string $uv$ is a concatenation of two strings $u$ and $v$.
%
%
A string $w$ of length $h$, $h\leq n$,  is a substring (or factor) of $s$ if $s = uwv$ for $u, v\in\Sigma^*$. Moreover we say
that  $w$ occurs at position $j$ of $s$  if $w=s_j \ldots s_{j+h-1}$. %= w_1 \ldots w_h$.
A string $x$ of length $m<n$ is a \emph{prefix} of $s$ if $x$ is a substring that occurs in $s$ at position $1$;
%$s= xu$ while
 it is a \emph{suffix} of $s$ if it is a substring that occurs in $s$ at position $n-m+1$.
 %$s= vy$ for some $u,w \in \Sigma^*$.
 A string $x$ that is both prefix and suffix of $s$ is called a  \emph{border} or a \emph{bifix} of $s$.
 The empty string and $s$ itself are \emph{trivial} borders of $s$. A string $s$ is \emph{unbordered} or \emph{bifix-free} if it has no borders unless the trivial ones.
Unbordered strings have received very much attention since they occur in many applications as message synchronization or string matching. In \cite{N73} P. T. Nielsen proposed a  procedure to generate all bifix-free strings of a given length.

Finally, two strings $s$ and $s'$ {\em overlap} if there exists a string $x$ that is a suffix of $s$ and a prefix of $s'$, or vice versa; the string $x$ is their overlap and $|x|$ is the length of the overlap. We will equivalently say that $s$ overlaps $s'$.

\subsection{Basic notations on pictures}\label{ss-2d-notation}
%VERIFICARE SE SERVONO TUTTE
%We recall some definitions about pictures
%(see \cite{GR-book}).
A {\em picture} over a finite
alphabet $\Sigma$ is a two-dimensional rectangular array of
elements of $\Sigma$. Given a picture $p$, $|p|_{row}$ and $|p|_{col}$ denote the number of rows and columns, respectively while $size(p)=\left(|p|_{row},|p|_{col}\right)$ denotes the picture {\em size}. The pictures of size
$(m,0)$ or $(0,n)$ for all $m,n\geq 0$,  called  \emph{empty} pictures, will be never considered in this paper.
The set of all pictures over $\Sigma$ of fixed size $(m,n)$ is denoted by $\Sigma^{m,n}$, while
%$\Sigma^{m*}$ and
% $\Sigma^{*n}$ denote
%the set of all pictures over $\Sigma$ with fixed number of rows $m$ and  columns $n$, respectively. Consistently to this,
the set of all pictures over $\Sigma$ is denoted by $\Sigma^{**}$.
%The empty pictures, referred to as  $\lambda_{m,0}$ and $\lambda_{0,n}$, for,  correspond to  all pictures of size

%, or $\Sigma^{++}$ if  the empty pictures are ignored. A {\em two-dimensional language}, or \emph{picture language}, over $\Sigma$ is a
%subset of $\Sigma^{**}$.
%For the rest of the paper we will not consider the case of  empty pictures.
%The set of all pictures over $\Sigma$ is denoted by $\Sigma^{**}$. A {\em two-dimensional language} (or \emph{picture language}) over $\Sigma$ is a
%subset of $\Sigma^{**}$.
%Remark that  set $\Sigma^{**}$
%includes also all the empty pictures, referred to as  $\lambda_{m,0}$ and $\lambda_{0,n}$ that correspond  all pictures of sizes
%$(m,0)$ or $(0,n)$ for all $m,n\geq 0$.

%In order to locate a position in a picture, it is necessary
%to put the picture in a reference system.
Let $p$ be a picture of size $(m,n)$. The  set of coordinates $dom(p)=\{1, 2, \ldots, m\}\times \{1, 2, \ldots, n\}$ is referred to as the \emph{domain} of a
picture $p$.
We let $p{(i,j)}$ denote the symbol  in $p$ at coordinates $(i,j)$.
%Positions in $dom(p)$ are ordered following the lexicographic order: $(i,j)<(i',j')$ if either $i<i'$ or $i=i'$ and $j<j'$.
We assume the top-left corner of the picture to be at position $(1,1)$.
Moreover, to easily detect border positions of pictures, we use initials of words ``top", ``bottom", ``left" and ``right"; then, for example, the
 \emph{tl-corner} of $p$ refers to position $(1,1)$ while the \emph{br-corner} refers to position $(m,n)$.
Furthermore,  we denote by
$r_F(p), r_L(p)\in \Sigma^n$ the first and the last row of $p$, respectively
and
by
$c_F(p), c_L(p)\in \Sigma^m$ the first and the last column of $p$, respectively. Then, the
 {\em frame} of $p$ is
$frame(p)=(r_F(p), r_L(p), c_F(p), c_L(p))$.

For the sequel, it is convenient to  extend the notation for the frame of a picture to languages.
Let $X\subseteq \Sigma^{m,n}$. Let us denote by
$R_F(X)\subseteq\Sigma^n$ the set
$R_F(X)=\{ r_F(p) \ | \ p\in X  \}$ of the first rows of all pictures in $X$.
In a similar way,
$R_L(X)$, $C_F(X)$, and $C_L(X)$
will denote
the sets of the last rows,
 of the first columns,
and
of the last columns of all pictures in $X$, respectively.
%$R_L(X)\subseteq\Sigma^n$ the set of the last rows of all pictures in $X$
%
%$C_F(X)\subseteq\Sigma^m$ the set of the first columns of all pictures in $X$
%
%$C_L(X)\subseteq\Sigma^m$ the set of the last columns of all pictures in $X$, and by
The {\em frame} of $X$ is the quadruple
$frame(X)=(R_F(X), R_L(X), C_F(X), C_L(X))$.

  A {\em subdomain} of $dom(p)$ is
  a set $d$ of the form $\{i, i+1, \ldots, i'\}\times \{j, j+1, \ldots,j'\}$, where
  $1\leq i\leq i'\leq m,\ 1\leq j\leq j'\leq n$,
  also specified  by the pair $[(i, j), ({i'}, {j'})]$.
The portion of $p$ corresponding to positions in subdomain $[(i, j), ({i'}, {j'})]$
 is  denoted by $p[(i, j), ({i'}, {j'})]$.
Then, a non-empty picture $x$ is \emph{subpicture of $p$} if $x=p[(i, j), ({i'}, {j'})]$, for some  $1\leq i\leq i'\leq m,\ 1\leq j\leq j'\leq n$; we say that $x$ {\em occurs} at position $(i,j)$ (its tl-corner).
%In other words, $x$ is a subpicture of $p$ that occurs at position $(i,j)$ when, after a translation of $x$ that lets its tl-corner  placed on position $(i,j)$ of $p$, the content of $x$ matches the content of the corresponding portion of $p$. EV. DELETE
 %(taking care of considering a common reference for the coordinates of positions).

Observe that the notions of subpicture generalizes very naturally  to two dimensions the notion of substring. On the other hand, the notions of prefix and suffix of a string implicitely assume the left-to-right reading direction.
%Assuming a top-left to bottom-right (tl2br) scanning direction for the pictures the following definition is given in
%\cite{AGM-dlt}.
In two dimensions, there are 4 corners and 4 scanning-directions from a corner toward the opposite one. Hence, we introduce the definition of 4 different "prefixes" of a pictures, each one referring to one corner.

%A picture $x$ is a \emph{prefix} of  $p$,  if $x$ is a subpicture of $p$ occurring at its top-left corner, i.e. if $x= p[(1, 1), ({|x|_{row}, |x|_{col}})]$.

 \begin{definition}\label{d-prefix}
Given pictures $p \in \Sigma^{m,n}$, $x \in \Sigma^{h,k}$,
with $1\leq h\leq m$, $1\leq k\leq n$,

$x$ is a \emph{tl-prefix} of  $p$
%denoted $x \unlhd p$,
 if $x$ is a subpicture of $p$ occurring at position $(1,1)$,

$x$ is a \emph{tr-prefix} of  $p$  if $x$ is a subpicture of $p$ occurring at position $(1,n-k+1)$,

$x$ is a \emph{bl-prefix} of  $p$  if $x$ is a subpicture of $p$ occurring at position $(m-h+1,1)$,

$x$ is a \emph{br-prefix} of  $p$  if $x$ is a subpicture of $p$ occurring at position $(m-h+1,n-k+1)$.
  \end{definition}

Several operations can be defined on pictures (cf. \cite{GR-book}).
Two  concatenation products are usually considered,
 the  {column} and the row concatenation.
% Let $p, q\in\Sigma^{**}$ be pictures
% %over an alphabet $\Sigma$,
%of size $(m, n)$ and $(m', n')$, respectively, %with $m,n,m',n'> 0$,
%the {\em column concatenation} of $p$ and $q$
%($p\obar q$) and the {\em row concatenation} of $p$ and $q$
%($p \ominus q$) are partial operations, defined only if $m=m'$ and if $n=n'$, respectively, as: %and are %given by:
%
%
%\begin{center}
%$ p \obar q= \begin{array}{|ccc|ccc|} \hline
%%\hspace{1em} &   & \hspace{1em} & \hspace{1em} &   & \hspace{1em} \\
% \hspace{1em} &  p &  \hspace{1em} &  \hspace{1em} & q & \hspace{1em}  \\
%% &  &   &   &   &  \\
%\hline
%\end{array}
%\hspace{2cm}
%p \ominus q = \begin{array}{|ccc|}
%\hline
%% \hspace{1em} &   &\hspace{1em}\\
% \hspace{1em} & p &  \hspace{1em} \\
%%&   &  \\
%\hline
%%&  &   \\
% & q &   \\
%%  &   &  \\
%\hline  \end{array}
%$
%\end{center}
%
%%These row- and column- concatenations definitions    can be extended to  languages and to define
%% \emph{row-} and \emph{column- stars}. If $X\subseteq \Sigma^{**}$,  the row- and column- star of $X$ are denoted by $X^{\ominus *}$ and $X^{\obar *}$, respectively \cite{GR-book}.
The reverse operation on strings can be generalized to pictures and gives rise to  two different mirror operations (called \emph{row}- and \emph{col}-\emph{mirror}) obtained by reflecting with respect to a vertical and a horizontal axis, respectively. Another  operation that has no counterpart in one dimension is the \emph{rotation}.
The rotation of
 a picture $p$ of size $(m,n)$, is the clockwise rotation of $p$ by $90^\circ$, denoted by $p^{90^\circ}$.
Note that $p^{90^\circ}$ has
 size $(n,m)$. %This operation  is simply referred to as the \emph{rotation} of $p$  and the rotation of $p$ is denoted by  $p^{90^\circ}$.
All the operations defined on pictures can be extended in the usual way to sets of pictures.

%For a language $L\subseteq \Sigma^{**}$,
% $L^{90^\circ}$ denotes the set of all pictures $p^{90^\circ}$ with $p \in L$.

%The figure below shows  the row-mirror and the col-mirror and the rotation  of a picture $p$.
%
%\vspace{1cm}
%FIGURELLA?
%
%\vspace{1cm}

We conclude by remarking that  any string $s=y_1y_2\cdots y_n$  can be identified either with a single-row or with a single-column picture, i.e. a picture of size $(1,n)$ or $(n,1)$,
%In the sequel it will be used the notation  $[y_1y_2\cdots y_n]$ to indicate a single-row picture,  while  a single-column picture  will be denoted by
%$[y_1y_2\cdots y_n]^{90^\circ}$.
whereas any picture in $\Sigma^{m,n}$ can be viewed as a string of length $n$ on the alphabet of the columns $\Sigma^m$,
and as  a string of length $m$ on the alphabet of the rows $\Sigma^n$.

%--------------------------------------------------------------

\section{Non-overlapping sets of pictures}\label{s-non-ov}

In this section we set up all the necessary definitions and notations on non-overlapping pictures. The notion is strictly related to the one of "unbordered pictures" already introduced and studied in \cite{AGM-CAI15}. Here, we state directly the definition of non-overlapping pictures getting back the notion of unbordered picture as a particular case.

Recall that two strings overlap when the prefix of one of them is the suffix of the other one.
This notion can be extended very naturally   to two dimensions by taking into account that  now 4 different corners exist.
 Informally, we say that two pictures $p$ and $q$ overlap when
%
%
% we can take a copy $p'$ of $p$, and overlap these two copies so that one of the corners of $p'$ is put on some position in $p$. The border of $p$ will be the subpicture corresponding to the portion where $p$ and $p'$ overlap. Out of the four possibilities, we get a pair of two mutual equivalent situations: putting the tl-corner of $p'$ on some position in $p$ is equivalent to say that we put the br-corner of $p$ on some position in $p'$. Analogously for the bl -corner.
%%: saying that the bl-corner of $p'$ is put somewhere in $p$ is equivalent to say that the tr-corner of $p$ is put inside $p'$.
% Then there are two different kinds of borders for pictures to which we refer to as tl-borders and bl-borders.
%COMPATTARE COME IN INTRO
we can find the same rectangular portion at a corner of $p$ and at the opposite corner of $q$.
Observe that there are two different kinds of overlaps  depending on the pair of opposite corners involved.

\begin{definition}
Let $p\in\Sigma^{m,n}$ and $q\in\Sigma^{m',n'}$.
\\
The  pictures $p$ and $q$ {\em tl-overlap} if there exists a picture $x\in\Sigma^{h,k}$, with $1\leq h\leq \min\{m, m'\}$
and $1\leq k\leq \min\{n, n'\}$,
which  is a tl-prefix of $p$ and a br-prefix of $q$, or vice versa.
\\
The  pictures $p$ and $q$ {\em bl-overlap} if there exists a picture $x\in\Sigma^{h,k}$, with $1\leq h\leq \min\{m, m'\}$
and $1\leq k\leq \min\{n, n'\}$,
which  is a bl-prefix of $p$ and a tr-prefix of $q$, or vice versa.
\\
The pictures $p$ and $q$ {\em overlap} if they tl-overlap or they bl-overlap.
\\
The picture $x$ is said an {\em overlap} of $p$ and $q$, and its size  $(h,k)$ is the {\em size of the overlap}.
\end{definition}

%\begin{definition}
%Let $p\in\Sigma^{m,n}$ and $q\in\Sigma^{m',n'}$.
%\\
%The  pictures $p$ and $q$ {\em (main)-overlap} if there exists a picture $x\in\Sigma^{h,k}$, with $1\leq h\leq \min\{m, m'\}$
%and $1\leq k\leq \min\{n, n'\}$
% that
%occurs in $p$ at position $(1,1)$ and in $q$ at position $(m'-h+1, n'-k+1)$.
%\\
%The  pictures $p$ and $q$ {\em (counter)-overlap} if there exists a picture $x\in\Sigma^{h,k}$, with $1\leq h\leq \min\{m, m'\}$
%and $1\leq k\leq \min\{n, n'\}$
% that
%occurs in $p$ at position $(m-h+1,1)$ and in $q$ at position $(1, n'-k+1)$.
%\\
%The pictures $p$ and $q$ {\em overlap} if they main-overlap or they counter-overlap.
%\\
%The {\em size of the overlap} is the size of $x$, i.e. $(h,k)$.
%\end{definition}

For the sequel it will be worthy to identify some special cases of picture overlaps and we list them in the definition below. Examples are given in Figure \ref{fig-overlap} where
the first pair of pictures tl-overlap, the second pair h-slide overlap, the third one v-slide overlap, and the last pair shows two pictures that frame overlap (and also bl-overlap).

\begin{definition}\label{d-ovlp2}
Let $p\in\Sigma^{m,n}$ and $q\in\Sigma^{m',n'}$, then%, and $x\in\Sigma^{h,k}$ be an overlap of $p$ and $q$
\begin{itemize}
\item  $p$ and $q$ {\em properly overlap} if they have an overlap $x\in\Sigma^{h,k}$ with $x\neq p$ and $x\neq q$
%otherwise, they {\em trivially overlap} %DDD ECCESSIVO nell'insieme poi le picture sono distinte??
\item $p$ and $q$ {\em h-slide overlap} if they have an overlap $x\in\Sigma^{h,k}$ with $h=m=m'$
\item $p$ and $q$ {\em v-slide overlap} if they have an overlap $x\in\Sigma^{h,k}$  with $k=n=n'$
\item $p$ and $q$ {\em frame overlap}  if they have an overlap $x\in\Sigma^{h,k}$  with $h=1$ or $k=1$ %DDD
\end{itemize}
\end{definition}

%Let $p\in\Sigma^{m,n}$, $q\in\Sigma^{m',n'}$, and $x\in\Sigma^{h,k}$ be an overlap of $p$ and $q$
%\begin{itemize}
%\item if $m=m'=h$ and $n=n'=k$ then $x$ is a {\em trivial overlap} of $p$ and $q$;
%it is a  {\em proper overlap}, otherwise
%\item if $m=m'$ then $x$  is a {\em h-slide overlap} of  $p$ and $q$
%\item if $n=n'$ then $x$  is a {\em v-slide overlap} of  $p$ and $q$
%\item if $h=1$ or $k=1$ then $x$  is a {\em frame overlap} of  $p$ and $q$
%\end{itemize}

\begin{figure} %\label{fig-overlap}
%\hspace{35pt}
%\begin{center}

\begin{picture}(120,25)(0,-5)
%PRIMA FIGURA
%Figura piÃ¹ alta
\put(2.5,11.5){$0$} \put(7.5,11.5){$1$} \put(12.5,11.5){$1$}
\put(2.5,6.5){$1$} \put(7.5,6.5){$1$} \put(12.5,6.5){$0$}
\put(2.5,1.5){$1$} \put(7.5,1.5){$0$} \put(12.5,1.5){$0$}
%%%linee orizzontali
\put(0,0){\line(1,0){15.5}}
\put(0,15){\line(1,0){15.5}}
%linee verticali
\put(0,0){\line(0,1){15}}
\put(15.5,0){\line(0,1){15}}
%Figura piÃ¹ bassa
\put(17.5,6.5){$1$} \put(22.5,6.5){$0$}
\put(17.5,1.5){$1$} \put(22.5,1.5){$0$}
\put(7.5,-3.5){$0$} \put(12.5,-3.5){$1$} \put(17.5,-3.5){$1$} \put(22.5,-3.5){$0$}
%linee verticali
\put(5.5,-5){\line(0,1){15}}
\put(25.5,-5){\line(0,1){15}}
%\drawline[dash={1.5}{0},AHnb=0](5.5,-5)(5.5,10)
%\drawline[dash={1.5}{0},AHnb=0](25.5,-5)(25.5,10)
%%%linee orizzontali
\put(5.5,-5){\line(1,0){20}}
\put(5.5,10){\line(1,0){20}}
%\drawline[dash={1.5}{0},AHnb=0](5.5,-5)(25.5,-5)
%\drawline[dash={1.5}{0},AHnb=0](5.5,10)(25.5,10)
%\put(5,-5){\line(0,1){15}}

%SECONDA FIGURA
%Figura piÃ¹ a sinistra
\put(37.5,11.5){$0$} \put(42.5,11.5){$1$} \put(47.5,11.5){$1$}
\put(37.5,6.5){$1$} \put(42.5,6.5){$1$} \put(47.5,6.5){$0$}
\put(37.5,1.5){$1$} \put(42.5,1.5){$0$} \put(47.5,1.5){$0$}
%%%linee orizzontali
\put(35,0){\line(1,0){15.5}}
\put(35,15){\line(1,0){15.5}}
%linee verticali
\put(35,0){\line(0,1){15}}
\put(50.5,0){\line(0,1){15}}
%Figura piÃ¹ a destra
\put(52.5,11.5){$0$}
\put(52.5,6.5){$0$}
\put(52.5,1.5){$0$}
%%%linee orizzontali
\put(40,-0.5){\line(1,0){15}}
\put(40,14.5){\line(1,0){15}}
%\drawline[dash={1.5}{0},AHnb=0](40,-0.5)(55,-0.5)
%\drawline[dash={1.5}{0},AHnb=0](40,14.5)(55,14.5)
%linee verticali
\put(40,-0.5){\line(0,1){15}}
\put(55,-0.5){\line(0,1){15}}
%\drawline[dash={1.5}{0},AHnb=0](40,-0.5)(40,14.5)
%\drawline[dash={1.5}{0},AHnb=0](55,-0.5)(55,14.5)

%TERZA FIGURA
%Figura piÃ¹ in alto
\put(67.5,11.5){$0$} \put(72.5,11.5){$1$} \put(77.5,11.5){$1$}
\put(67.5,6.5){$1$} \put(72.5,6.5){$1$} \put(77.5,6.5){$0$}
\put(67.5,1.5){$1$} \put(72.5,1.5){$0$} \put(77.5,1.5){$0$}
%%%linee orizzontali
\put(65,0){\line(1,0){15.5}}
\put(65,15){\line(1,0){15.5}}
%linee verticali
\put(65,0){\line(0,1){15}}
\put(80.5,0){\line(0,1){15}}
%Figura piÃ¹ in basso
\put(67.5,-3.5){$0$} \put(72.5,-3.5){$1$} \put(77.5,-3.5){$1$}
%%%linee orizzontali
\put(65.5,10){\line(1,0){15.5}}
\put(65.5,-5){\line(1,0){15.5}}
%\drawline[dash={1.5}{0},AHnb=0](65.5,10)(81,10)
%\drawline[dash={1.5}{0},AHnb=0](65.5,-5)(81,-5)
%linee verticali
\put(65.5,-5){\line(0,1){15}}
\put(81,-5){\line(0,1){15}}
%\drawline[dash={1.5}{0},AHnb=0](65.5,-5)(65.5,10)
%\drawline[dash={1.5}{0},AHnb=0](81,-5)(81,10)

%QUARTA FIGURA
%Figura piÃ¹ in basso
\put(92.5,6.5){$0$} \put(97.5,6.5){$0$} \put(102.5,6.5){$1$} \put(107.5,6.5){$1$}
\put(92.5,1.5){$1$}  \put(97.5,1.5){$1$} \put(102.5,1.5){$1$} \put(107.5,1.5){$0$}
\put(92.5,-3.5){$1$}  \put(97.5,-3.5){$1$} \put(102.5,-3.5){$0$} \put(107.5,-3.5){$0$}
%%%linee orizzontali
\put(90,-5){\line(1,0){20.5}}
\put(90,10){\line(1,0){20.5}}
%linee verticali
\put(90,-5){\line(0,1){15}}
\put(110.5,-5){\line(0,1){15}}
%Figura piÃ¹ in alto
\put(107.5,11.5){$0$} \put(112.5,11.5){$1$} \put(117.5,11.5){$1$}
\put(112.5,6.5){$0$} \put(117.5,6.5){$0$}
\put(112.5,1.5){$1$} \put(117.5,1.5){$1$}
%%%linee orizzontali
\put(105.5,15){\line(1,0){15}}
\put(105.5,0){\line(1,0){15}}
%\drawline[dash={1.5}{0},AHnb=0](105.5,10)(120.5,10)
%\drawline[dash={1.5}{0},AHnb=0](105.5,-5)(120.5,-5)
%linee verticali
\put(105.5,0){\line(0,1){15}}
\put(120.5,0){\line(0,1){15}}
%\drawline[dash={1.5}{0},AHnb=0](105.5,-5)(105.5,10)
%\drawline[dash={1.5}{0},AHnb=0](120.5,-5)(120.5,10)

\end{picture}

%\end{center}
\caption{}
%\label{prima}

%\end{center}

%\end{center}
%\caption{}
%\label{prima}
\label{fig-overlap}
\end{figure}
%\end{center}

The case when $p$ overlaps with itself % its copy $q=p$ (or self-overlaps), is a special case which %deserves to be considered.....
leads to the notions of border of a picture, self-overlapping and unbordered pictures (as investigated in \cite{AGM-CAI15}).
As for the overlaps, there are   two different kinds of borders (\emph{tl-borders} and \emph{bl-border}) depending on the pair of opposite corners that hold the border. A tl-border is called a diagonal border in \cite{CroIK98}.

The notion of "non-overlapping" is naturally extended to sets  of pictures in order  to generalize the notion of cross-bifix-free sets of strings, introduced in \cite{Bajic07}.
Non-overlapping sets of pictures have been introduced and studied in  \cite{BBP16}, where a first example of a non-overlapping set of matrices is presented. Here, we add the notion of "non-expandability" in order to consider sets of large cardinality. Notice that, in analogy to the case of cross-bifix-free sets of strings, we will consider set of pictures of fixed size.

\begin{definition}
%Let $X\subseteq\Sigma^{m,n}$.
%\\
A set  of pictures $X\subseteq\Sigma^{m,n}$ is {\em non-overlapping} if for any $p, q\in X$, $p$ and $q$ do not properly overlap.
\\
Moreover, a set $X\subseteq\Sigma^{m,n}$ is {\em non-expandable non-overlapping}, {\em NENO} for short, if $X$ is non-overlapping and
for any $p\in\Sigma^{m,n}\setminus X$, there exists $q\in X$ such that $p$ and $q$ overlap.
\end{definition}

%A non-expandable non-overlapping set will be denoted {\em NENO}, for short.

Note that it is not difficult to find a set of non-overlapping pictures of a fixed size $(m,n)$; the real challenge is to find sets of large cardinality and, more generally, to define families \emph{for any}  size $(m,n)$.  %Interesting non-expandable cross-bifix-free sets of strings were recently presented    in
%\cite{Bajic14,BPP12, CKPW13}.

In the  sequel, we propose some conditions for NENO sets that will be used in   Section \ref{s-fanta-neno} to show a family of NENO sets $Y(m,n)$, for any  size $(m,n)$.
Observe that further examples can be simply obtained applying to a NENO set the operations of
rotation, col- and row-mirror operations, and permutation or renaming of symbols in $\Sigma$.

Let us now consider some properties which are necessarily satisfied by any non-overlapping set of pictures.
First, observe that any picture in a non-overlapping set is necessarily unbordered.
In order to show some properties on the frames of the pictures in a non-overlapping set, let us introduce the following definition.
Note that in a cross-non-overlapping  pair $(S_1,S_2)$,  it is not required that $S_1$ and $S_2$ are cross-bifix-free sets.

 \begin{definition}\label{d-full}
Let $S_1$, $S_2 \subseteq \Sigma^n$ and $S_1\cap S_2=\emptyset$.

The pair $(S_1,S_2)$ is {\em cross-non-overlapping} if for any $s_1\in S_1$, $s_2\in S_2$, $s_1$ and $s_2$ do not overlap.\\
\end{definition}

%The following proposition states some properties on the frames of the pictures.
\begin{theorem}
Let $X\subseteq\Sigma^{m,n}$.
If $X$ is non-overlapping then
the pairs $( R_F(X), R_L(X) )$ and $(C_F(X), C_L(X))$ are cross-non-overlapping.
\end{theorem}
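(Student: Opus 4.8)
The plan is to lift a \emph{string} overlap between a first row and a last row of pictures of $X$ to a \emph{picture} overlap between the two pictures these rows come from, and then to contradict the hypothesis that $X$ is non-overlapping. For the pair $(R_F(X),R_L(X))$ I would verify two things: that $R_F(X)\cap R_L(X)=\emptyset$, which is part of what it means for $(R_F(X),R_L(X))$ to be a cross-non-overlapping pair, and that no $s_1\in R_F(X)$ overlaps any $s_2\in R_L(X)$ as strings. The pair $(C_F(X),C_L(X))$ is then treated in the same way, reading each picture column-wise as a string of length $m$ over $\Sigma$ (equivalently, by applying the row case to the rotation $X^{90^\circ}$, since the non-overlapping property is invariant under rotation and string overlap is invariant under reversal). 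The one ingredient I would set up first is the dictionary between the one- and two-dimensional notions of ``prefix'': by Definition~\ref{d-prefix}, a length-$k$ prefix of $r_F(p)$ is a tl-prefix of $p$ of size $(1,k)$, a length-$k$ suffix of $r_F(p)$ is a tr-prefix of $p$ of size $(1,k)$, a length-$k$ prefix of $r_L(p)$ is a bl-prefix of $p$ of size $(1,k)$, and a length-$k$ suffix of $r_L(p)$ is a br-prefix of $p$ of size $(1,k)$.

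First I would prove disjointness. Suppose $r_F(p)=r_L(q)=:s$ for some $p,q\in X$. Then the picture $s$ of size $(1,n)$ is simultaneously a tl-prefix of $p$ (it is the first row of $p$, occurring at position $(1,1)$) and a br-prefix of $q$ (it is the last row of $q$, occurring at position $(m,1)$), so $p$ and $q$ tl-overlap with overlap $s$; since $m\ge2$, the size $(1,n)$ of $s$ is not the common size $(m,n)$ of $p$ and $q$, so this overlap is proper, contradicting that $X$ is non-overlapping. Hence $R_F(X)\cap R_L(X)=\emptyset$, and, by the analogous argument, $C_F(X)\cap C_L(X)=\emptyset$.

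Then I would rule out cross-overlaps. Take $p,q\in X$ with $s_1=r_F(p)$ and $s_2=r_L(q)$ and suppose $s_1$ and $s_2$ overlap as strings via a nonempty overlap $z$; since $s_1\neq s_2$ (they lie in the disjoint sets $R_F(X)$ and $R_L(X)$) and $|s_1|=|s_2|=n$, the length $k$ of $z$ satisfies $1\le k<n$. If $z$ is a suffix of $s_1$ and a prefix of $s_2$, then by the dictionary $z$, viewed as a picture of size $(1,k)$, is a tr-prefix of $p$ and a bl-prefix of $q$, so $p$ and $q$ bl-overlap with overlap $z$; in the other orientation $z$ is a prefix of $s_1$ and a suffix of $s_2$, hence a tl-prefix of $p$ and a br-prefix of $q$, so $p$ and $q$ tl-overlap with overlap $z$. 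Either way $p$ and $q$ overlap with an overlap of size $(1,k)\neq(m,n)$, hence properly, contradicting that $X$ is non-overlapping; so no such $s_1,s_2$ overlap, and the pair $(C_F(X),C_L(X))$ is handled identically. The main point requiring care --- rather than a genuine obstacle --- is the bookkeeping: matching each one-dimensional prefix/suffix to the right corner-prefix of the picture, covering both orientations of the string overlap, and checking that the induced picture overlap is proper, the last point holding because a one-row (resp.\ one-column) subpicture cannot coincide with a picture having at least two rows (resp.\ at least two columns).
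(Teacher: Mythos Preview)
Your argument is correct and follows the same approach as the paper's proof: lift a putative string overlap between a first row and a last row to a (frame) overlap of the corresponding pictures, contradicting that $X$ is non-overlapping, and treat columns analogously. In fact you are more careful than the paper, since you also verify the disjointness precondition $R_F(X)\cap R_L(X)=\emptyset$ in Definition~\ref{d-full}, distinguish the two orientations of the string overlap, and check properness of the resulting picture overlap.
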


\begin{proof}
Suppose by the contrary that there exist $s_1\in R_F(X)$,  $s_2\in R_L(X)$, and $s_1$ and $s_2$ overlap.
Then any pictures $p_1, p_2\in X$ with $r_F(p_1)=s_1$ and %any picture $p_2\in X$ with
 $r_L(p_2)=s_2$ are such that $p_1$ and $p_2$ overlap, against $X$ non-overlapping.
An analogous reasoning holds for the sets of columns.
%\qed
\end{proof}

Some more properties can be stated in the case of the binary alphabet.
If $X$ is a non-overlapping set over a binary alphabet then the four corners of any picture in $X$ carry the same quadruple of symbols.
  We state this simple necessary condition in the lemma below.

Let $corners(p)= (p(1,1), p(1,n), p(m,1), p(m,n))$, for any picture $p$.

\begin{lemma}\label{l-corner-non-ov-sets}
Let $\Sigma=\{ 0,1\}$ and $X\subseteq \Sigma^{m,n}$. If $X$ is a non-overlapping set then
only four cases are possible

a) $corners(p)=(0,0,1,1)$, for any picture $p\in X$

b) $corners(p)=(1,1,0,0)$, for any picture $p\in X$

c) $corners(p)=(1,0, 1,0)$, for any picture $p\in X$

d) $corners(p)=(0,1,0,1)$, for any picture $p\in X$.
\end{lemma}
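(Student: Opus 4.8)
The plan is to use only the smallest possible overlaps, those of size $(1,1)$, and to observe that forbidding them across all pairs of pictures of $X$ already pins down the four corner symbols.

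First I would unwind what a $1\times1$ overlap means. By Definition~\ref{d-prefix}, a picture $x\in\Sigma^{1,1}$ is the tl-prefix of $p$ of size $(1,1)$ exactly when its unique entry is $p(1,1)$, and it is the br-prefix of $q$ of size $(1,1)$ exactly when that entry equals $q(m,n)$; similarly $x$ is the bl-prefix of $p$ iff its entry is $p(m,1)$, and the tr-prefix of $q$ iff its entry is $q(1,n)$. Hence $p(1,1)=q(m,n)$ makes $p$ and $q$ tl-overlap, and $p(m,1)=q(1,n)$ makes them bl-overlap, in both cases through an overlap $x$ of size $(1,1)$; as $(m,n)\neq(1,1)$, this $x$ differs from $p$ and from $q$, so the overlap is proper (in fact a frame overlap). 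Therefore $X$ non-overlapping forces, for all $p,q\in X$ (the instance $p=q$ being a special case of $p$ being unbordered),
\[
p(1,1)\neq q(m,n)\qquad\text{and}\qquad p(m,1)\neq q(1,n).
\]

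It then remains to read these inequalities as disjointness statements. Since $p(1,1)\neq q(m,n)$ for all $p,q\in X$, the sets $\{p(1,1):p\in X\}$ and $\{q(m,n):q\in X\}$ are disjoint; as $X\neq\emptyset$ both are nonempty subsets of $\{0,1\}$, so either $p(1,1)=0$ for every $p$ and $q(m,n)=1$ for every $q$, or the reverse holds. In the same way, $p(m,1)\neq q(1,n)$ for all $p,q$ gives that either $p(m,1)=1$ and $q(1,n)=0$ for all $p,q$, or the reverse. Combining the two dichotomies leaves exactly the four possibilities a)--d) for the quadruple $corners(p)=(p(1,1),p(1,n),p(m,1),p(m,n))$, and the selected possibility is the same for every $p\in X$.

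The one real point is this last step. Unborderedness alone (the $p=q$ instance) already puts each individual picture into one of the four corner patterns, but it is the cross-overlap condition over \emph{all} pairs that forbids two pictures of $X$ from sitting in different patterns; recognizing that the pairwise inequalities say precisely that two sets of values are disjoint --- which over a two-letter alphabet means each of them is a singleton --- is where the argument actually bites. Beyond the $2\times 2$ combination of the two dichotomies, no case analysis is needed.
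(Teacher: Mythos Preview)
Your proof is correct and follows essentially the same approach as the paper's: both arguments hinge entirely on the $1\times 1$ overlaps, using the self-overlap (unborderedness) constraint to restrict each picture's corners and the cross-overlap constraint to force all pictures of $X$ into the same pattern. Your packaging via disjointness of the two pairs of corner-value sets over a two-letter alphabet is a slightly cleaner way to avoid the pairwise case comparison the paper alludes to, but the underlying idea is identical.
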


\begin{proof}
Given an unbordered picture $p$, $corners(p)$ must be of the form a), b), c) or d) otherwise $p$ would have a border of size $(1,1)$.
Since any picture in a non-overlapping set is necessarily unbordered, we have that, for any $p \in X$, $corners(p)$ must be of the form a), b), c) or d).
The proof is completed by noting that, for any $p, q \in X$, it must be $corners(p)=corners(q)$; if the sets $corners(p)$ and $corners(q)$ were of two different forms among
a), b), c) or d), then $p$ and $q$ would have an overlap of size $(1,1)$ against the hypothesis that $X$ is non-overlapping.
\end{proof}

%\begin{remark}\label{r-corner-non-ov-sets}
%Let $\Sigma=\{ 0,1\}$ and $X\subseteq \Sigma^{m,n}$ be a set of pictures of fixed size.
%If $X$ is a non-overlapping set then the four corners of any picture in $X$ carry the same symbols (see also Remark \ref{r-easy}).

%More precisely, for any picture $p\in X$, let $corners(p)= (p(1,1), p(1,n), p(m,1), p(m,n))$.
%If $X$ is a non-overlapping set then only four cases are possible

%a) $corners(p)=(0,0,1,1)$

%b) $corners(p)=(1,1,0,0)$

%c) $corners(p)=(1,0, 1,0)$

%d) $corners(p)=(0,1,0,1)$.

%\end{remark}
%($11$, $10$, $01$, resp.)

%{\bf Forse la successiva non va messa piu' come definizione ma soltanto coem notazione}

%\begin{definition}\label{d-type}

%In view of the previous Lemma, we will say that a non-overlapping set   $X\subseteq \Sigma^{m,n}$ over
% $\Sigma=\{ 0, 1\}$ is
%
%{\em of type $00$}  if $corners(p)=(0,0,1,1)$,
%
%{\em of type $11$}  if  $corners(p)=(1,1,0,0)$,
%
%{\em of type $10$}  if $corners(p)=(1,0,1,0)$,
%
%{\em of type $01$}  if $corners(p)=(0,1,0,1)$.
%%\end{definition}

%--------------------------------------------------------------

\section{The construction}\label{s-constr}
%DDDDFORSE "A construction" o "some condition for a constraction of a neno set"?????
%In this section we present a general method to construct  a non-expandable non-overlapping set of pictures.
%The construction is accomplished in two main steps to first include ( and later to exclude ) pictures satisfying specific conditions.
%
%In the first step the main goal is to meet the non-expandability property. This is achieved by avoiding only frame overlaps (see Definition \ref{d-ovlp2}).
%We construct a set of pictures $X$ such that any picture outside $X$ necessarily overlaps with some picture in $X$ on its frame while picture inside $X$ do not frame overlap each other (we will call it frame-complete).
%%At this stage, we do not care about whether the pictures inside $X$ overlap each other.
%Subsequently, a second step is necessary to extract from $X$ a suitable subset $Y$ that is non-overlapping. The subset $Y$ must be chosen so that it satisfies some conditions that guarantee  its non-expandability, too. We prove that this procedure will lead to a NENO set.

In this section we present some properties on non-expandable non-overlapping  sets, from which it is possible to obtain a
a general method to construct  a NENO set of pictures.
The focus will be firstly on the frames of the pictures in a NENO set and their properties related to the non-expandability of the set (see Theorem  \ref{l-non-exp}).
Subsequently, we will present some further conditions on the internal part of the pictures in a NENO set (see Theorem  \ref{l-non-ov}).
The obtained conditions, all together, will be sufficient to construct a NENO set.

Let us introduce the following definition on string sets. % which captures an important property on the frame of a language. $frame(X)$.

%
%\smallskip
%We first consider some properties on string sets.
 \begin{definition}\label{d-full}
Let $S_1$, $S_2 \subseteq \Sigma^n$ and $S_1\cap S_2=\emptyset$.
%The pair $(S_1,S_2)$ is {\em cross-non-overlapping} if for any $s_1\in S_1$, $s_2\in S_2$, $s_1$ and $s_2$ do not overlap.\\
%A cross-non-overlapping pair $(S_1,S_2)$ is
%Moreover the
The pair $(S_1,S_2)$ is {\em full} if for any $s\notin S_1\cup S_2$, there exist
$s_1\in S_1$, $s_2\in S_2$ such that $s$ and $s_1$ overlap, and  $s$ and $s_2$ overlap.
 \end{definition}

%HO COMMENTATO LA DEF DI PAIR NON-CROSS-OV 	PERCHE' NON E' NECESSARIA per X frame-full. Essa implica il fatto che due picture di X non si frame overlappano. SE DITE LA RIESUMIAMO (QUI O DOPO).
%
%TOGLIERE?:
%Note that in a cross-non-overlapping  pair $(S_1,S_2)$,  it is not required that $S_1$ and $S_2$ are cross-bifix-free sets.
%The non-overlapping conditions concern two pictures taken in different sets.

\begin{example}\label{e-righe-fanta}
Let $\Sigma=\{ 0, 1\}$ and let $S_1, S_2\subseteq\Sigma^n$ be  the languages
 $S_1=\{ 1^n\}$ and $S_2=\{ 0w0 \ |\ w\in \{ 0, 1\}^{n-2}\}$.
The pair $(S_1,S_2)$ is  cross-non-overlapping, since the  strings in $S_1$ do not contain occurences of symbol $0$.

Let us show that it is also
full.
Let $s$ be any string $s\notin S_1\cup S_2$.
Since $s\notin S_2$, three cases are possible,
$s=0x1$, $s=1y0$, or $s=1z1$.
If $s=0x1$ or $s=1y0$ then $s$ overlaps the string in $S_1$ and any string in $S_2$ with an overlap of length $1$.
If $s=1z1$ then $z$ contains at least one occurence of $0$, because $s\notin S_1$.
Then $s=1^k0r1$, for some $k\geq 1$ and $r\in\Sigma^*$.
Therefore, $s$ and $1^n$ overlap, and also $s$ and $0r10^k$ overlap, with $0r10^k\in S_2$.
\end{example}

\begin{example}\label{e-colonne-fanta}
Let $\Sigma=\{ 0, 1\}$ and let $S_3, S_4\subseteq\Sigma^m$ be the languages
 $S_3=\{ 110^{m-2}\}$ and $S_4=\{ 1w0 \ |\ |w|=m-2, w\neq 0^{m-2}, 1w0$ with no suffix in $110^+\}$.
The pair $(S_3,S_4)$ is  cross-non-overlapping. Indeed, consider $s'=110^{m-2} \in S_3$ and $s''=1w0 \in S_4$ and suppose that
there exists $u$ such that $s'=xu$ and $s''=uy$ or $s'=ux$ and $s''=yu$. In the first case, it must be $|u|=m-1$ and, hence, $w=0^{m-2}$ against the definition of $S_4$.
In the second case, it must be $|u| \geq 3$ and, hence, $w$ would have a suffix in $110^+$ against the definition of $S_4$.

%It is also full, since..............................
Let us show that the pair $(S_3,S_4)$ is full.
Let $s$ be any string $s\notin S_3\cup S_4$.
If $s\in 0\Sigma^*$ or $s\in \Sigma^* 1$ then $s$ overlap the string in $S_3$ and any string in $S_4$ with an overlap of length $1$.
Consider the case that $s=1w0$.
Since $s\notin S_4$, $w=0^{m-2}$ or $w$ has a suffix in $110^+$.
In the first case $s=10^{m-1}$. Then $s$ and the string in $S_3$ overlap with an overlap of length $m-1$; and $s$
overlaps any string in $S_4$ that has $010$ as a suffix, with an overlap of length $2$.

In the second case
$s=1x110^k$ with $k\geq 1$, and $x\in\Sigma^*$.
Then $s$ and the string in $S_3$ overlap with an overlap of length $k+2$,
and $s$ and any string $10^k y0\in S_4$, for some $y$, overlap with an overlap of length $k+1$.
\end{example}

%For the sequel it is convenient to  extend the notation for the frame of a picture to languages.
%Let $X\subseteq \Sigma^{m,n}$. Let us denote by
%$R_F(X)\subseteq\Sigma^n$ the set
%$R_F(X)=\{ r_F(p) \ | \ p\in X  \}$ of the first rows of all pictures in $X$.
%In a similar way,
%$R_L(X)$, $C_F(X)$, and $C_L(X)$
%will denote
%the sets of the last rows,
% of the first columns,
%and
%of the last columns of all pictures in $X$, respectively.
%%$R_L(X)\subseteq\Sigma^n$ the set of the last rows of all pictures in $X$
%%
%%$C_F(X)\subseteq\Sigma^m$ the set of the first columns of all pictures in $X$
%%
%%$C_L(X)\subseteq\Sigma^m$ the set of the last columns of all pictures in $X$, and by
%The {\em frame} of $X$ is the quadruple
%$frame(X)=(R_F(X), R_L(X), C_F(X), C_L(X))$.

Recall that the frame of a language $X$ % $X\subseteq\Sigma^{m,n}$
 is the quadruple
$frame(X)$
of the sets of its first and last rows and columns,
$frame(X)=(R_F(X), R_L(X), C_F(X), C_L(X))$.
Note that not any quadruple of string languages  can be the frame of a set of pictures, since their strings need to match    in the lengths and in the corner positions.
Given four string languages $S_1, S_2\subseteq \Sigma^n$ and
 $S_3, S_4\subseteq \Sigma^m$,
we say that the quadruple  $(S_1,S_2,S_3,S_4)$ is {\em frame-compatible} if there exists a picture language $X\subseteq \Sigma^{m,n}$ such that
$frame(X)=(S_1,S_2,S_3,S_4)$.
In the case of non-overlapping sets of pictures,
the constrains are yet stronger.
As previously remarked (see Lemma \ref{l-corner-non-ov-sets}),
if $X$ is a non-overlapping set on $\Sigma=\{ 0, 1\}$, then all the pictures in $X$ have the same set of corners and, for this set, only four cases are possible,
%$p\ in X$ can be of type $00$, $11$, $10$, or $01$,
following which symbols appear in the corners of pictures in $X$.
Hence, for example, a quadruple  $(S_1,S_2,S_3,S_4)$ can be the frame of a non-overlapping set $X$ such that, for all $p \in X$, $corners(p)=(0,0,1,1)$,
if $S_1\subseteq 0\Sigma^* 0$,
$S_2\subseteq 1\Sigma^* 1$, and
$S_3, S_4\subseteq 0\Sigma^* 1$.
%$S_4\subseteq 0\Sigma^* 0$,.

We give the following definition.

%The frames constitute a barrier against the outsider pictures!!! (scherzo...)

\begin{definition}
Let $X\subseteq \Sigma^{m,n}$.
$X$ is {\em frame-complete} if for any $p, q\in X$, $p$ and $q$ do not frame overlap, and
% pictures in $X$ are not frame overlapping and
if for any picture $p \in \Sigma^{m,n}\setminus X$ there exists a picture $q\in X$ such that $p$ and $q$ frame overlap.
\end{definition}

%\begin{proposition}\label{l-non-exp}
%Let $X\subseteq \Sigma^{m,n}$.
%If $(R_F(X),R_L(X))$ and $(C_F(X),C_L(X))$
%are two full cross-non-overlapping pairs
%and $(R_F(X),R_L(X), C_F(X),C_L(X))$ is compatible,
%then $X$ is frame-full.
%%(oppure: for any picture $p\notin X$ there exists a picture $x\in X$ such that $p$ and $x$ overlap (sulla cornice: dobbiamo dirlo qui e definirlo o fare un remark?))
%\end{proposition}
%
%\begin{proof}
%credo sia banale
%\qed
%\end{proof}
Note that being frame-complete is  a \emph{sufficient condition} for a set $X$ to be non-expandable with respect to the overlapping. Any picture not in the language overlaps on the frame of some picture in the language. %This is proved in the following theorem.
Frame-complete sets of pictures can be obtained thanks to the following theorem.

\begin{theorem}\label{l-non-exp}
Let $S_1, S_2\subseteq \Sigma^{n}$,
$S_3, S_4\subseteq \Sigma^{m}$
and $(S_1, S_2, S_3, S_4)$ be a quadruple of frame-compatible string languages.
% $X$ be the set
%of all the pictures $p$ with $frame(p)\in S_1 \times S_2\times S_3\times S_4$.
\\
If the pairs $(S_1, S_2)$ and $(S_3, S_4)$
are   cross non-overlapping and full
then the set $X$ of all the pictures $p$ with $frame(p)\in S_1 \times S_2\times S_3\times S_4$
is frame-complete.
%(oppure: for any picture $p\notin X$ there exists a picture $x\in X$ such that $p$ and $x$ overlap (sulla cornice: dobbiamo dirlo qui e definirlo o fare un remark?))
\end{theorem}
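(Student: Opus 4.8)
The plan is to verify the two defining conditions of frame-completeness separately. Let $X$ be the set of all pictures $p$ with $frame(p)\in S_1\times S_2\times S_3\times S_4$; by frame-compatibility this set is non-empty and every picture in it has first row in $S_1$, last row in $S_2$, first column in $S_3$, last column in $S_4$.

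\emph{Step 1: no two pictures in $X$ frame overlap.} Suppose $p,q\in X$ frame overlap, so there is an overlap $x$ of size $(h,k)$ with $h=1$ or $k=1$. Consider the case $h=1$ (the case $k=1$ is symmetric, using columns in place of rows). Then $x$ is a single-row picture, i.e.\ a string of length $k$, which is simultaneously a ``prefix'' of one boundary row of $p$ and a ``suffix'' of a boundary row of $q$ (which boundary rows are involved depends on which of the tl/bl overlaps occurs, but in every case one row is a first-or-last row of $p$ and the other a first-or-last row of $q$). Since $r_F(p)\in S_1$, $r_L(p)\in S_2$, $r_F(q)\in S_1$, $r_L(q)\in S_2$, this yields a string in $S_1$ overlapping a string in $S_2$ — contradicting that $(S_1,S_2)$ is cross-non-overlapping. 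Here I should double-check that a frame overlap of a tl-type always pairs a first row with a last row (and a first column with a last column), which is immediate from Definition~\ref{d-prefix}: a tl-prefix of size $(1,k)$ sits in the first row, a br-prefix of size $(1,k)$ sits in the last row. So no frame overlap among elements of $X$ is possible.

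\emph{Step 2: every picture outside $X$ frame overlaps some picture of $X$.} Let $p\in\Sigma^{m,n}\setminus X$. Then $frame(p)\notin S_1\times S_2\times S_3\times S_4$, so at least one of $r_F(p)\notin S_1$, $r_L(p)\notin S_2$, $c_F(p)\notin S_3$, $c_L(p)\notin S_4$ holds. Treat the case $r_F(p)\notin S_1$; here I need to split further according to whether $r_F(p)\in S_2$ or $r_F(p)\notin S_1\cup S_2$. If $r_F(p)\notin S_1\cup S_2$, then by fullness of $(S_1,S_2)$ there is $s_2\in S_2$ with $r_F(p)$ and $s_2$ overlapping; choosing (by frame-compatibility) a picture $q\in X$ with $r_L(q)=s_2$, the overlap of the strings $r_F(p)$ and $r_L(q)$ lifts to a frame overlap (of size $(1,k)$) between $p$ and $q$. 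If instead $r_F(p)\in S_2$: then since $r_F(p)\notin S_1$ and $(S_1,S_2)$ is full with $S_1\cap S_2=\emptyset$, I argue similarly by noting $r_F(p)\in S_2$ already, so I can take a picture $q\in X$ whose \emph{first} row $r_F(q)$ is some $s_1\in S_1$ overlapping $r_F(p)$ — such an $s_1$ exists because fullness of the pair applied to any string not in $S_1\cup S_2$ gives overlaps with both sides, and I will need the separate easy observation that every string in $S_2$ overlaps some string in $S_1$ (this follows from cross-non-overlapping being symmetric together with… actually this needs the overlap to exist, which is given by applying the ``vice versa'' direction). The cleanest route is: regardless of which membership among $r_F(p)$ fails, produce a string in the \emph{opposite} boundary set of $X$ that overlaps $r_F(p)$, realize it as the corresponding boundary row of some $q\in X$ via frame-compatibility, and conclude that $p$ and $q$ frame overlap. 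The cases $r_L(p)\notin S_2$, $c_F(p)\notin S_3$, $c_L(p)\notin S_4$ are handled the same way, using fullness of $(S_3,S_4)$ for the column cases.

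\emph{Main obstacle.} The delicate point is Step 2 in the situation where the failing boundary string of $p$ happens to lie in the ``wrong'' one of the two sets (e.g.\ $r_F(p)\in S_2\setminus S_1$): fullness as stated only produces overlap partners for strings \emph{outside} $S_1\cup S_2$, so one must separately observe that every element of one set overlaps some element of the other — or, more carefully, re-examine the definition of frame overlap to see that $r_F(p)$ being a boundary row of $p$ can be matched against a boundary row of $q$ of the appropriate type, and that the string-overlap produced by fullness is exactly of the shape (prefix of one, suffix of the other) needed for a genuine picture frame overlap. Once the bookkeeping of ``which corner, which row/column, which of $S_1/S_2$'' is pinned down, the rest is routine; frame-compatibility is invoked only to guarantee that the needed boundary rows/columns are actually attained by some picture in $X$.
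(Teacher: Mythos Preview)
Your Step~1 is essentially the paper's argument and is fine (the edge case of a full-width overlap of size $(1,n)$ is handled by $S_1\cap S_2=\emptyset$, which is part of the hypothesis on the pair).

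The gap is in Step~2, precisely at the ``main obstacle'' you flag. When $r_F(p)\in S_2\setminus S_1$, your proposed route is to find some $s_1\in S_1$ that overlaps $r_F(p)$ and realize it as $r_F(q)$ for some $q\in X$. This fails for two independent reasons. First, no such $s_1$ can exist: $(S_1,S_2)$ is cross-non-overlapping, so no string in $S_1$ overlaps any string in $S_2$, and your ``separate easy observation that every string in $S_2$ overlaps some string in $S_1$'' is simply false under the hypotheses. Second, even if such an $s_1$ existed, a string overlap between $r_F(p)$ and $r_F(q)$ does \emph{not} yield a picture frame overlap: a height-$1$ picture overlap always pairs a first row of one picture with a \emph{last} row of the other (tl-prefix versus br-prefix, or bl-prefix versus tr-prefix), never first row with first row.

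The actual fix, and what the paper does, is much simpler and avoids fullness entirely in this sub-case: since $r_F(p)\in S_2=R_L(X)$, pick any $q\in X$ with $r_L(q)=r_F(p)$. Then the whole first row of $p$ coincides with the whole last row of $q$, so $p$ and $q$ have an overlap of size $(1,n)$, which is a frame overlap (indeed a v-slide overlap). Once you patch this sub-case, your argument matches the paper's proof.
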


\begin{proof}
%credo sia banale

The frame-compatibility of $(S_1, S_2, S_3, S_4)$ guarantees that $X$ is not empty.

Moreover, since the pairs $(S_1, S_2)$ and $(S_3, S_4)$
are cross-non-overlapping, then, for any $p,q \in X$, we are sure that $p$ and $q$ do not frame overlap.
%MANCA PRIMA PARTE DELLA PROVA.....

Now let $p$ be any picture $p \in \Sigma^{m,n}\setminus X$.
Let us show that there exists a picture $q\in X$ such that $p$ and $q$ frame overlap.
The definition of $X$ implies that $frame(p)\notin  S_1 \times S_2\times S_3\times S_4$.
Hence,
either $r_F(p)\notin S_1$, or
 $r_L(p)\notin S_2$, or
 $c_F(p)\notin S_3$, or
 $c_L(p)\notin S_4$.
Suppose without loss of generality that $r_F(p)\notin S_1$.

If also $r_F(p)\notin S_2$ then there exists $s\in S_2$ such that $r_F(p)$ and $s$ overlap (because $(S_1,S_2)$ is full).
Consider any picture $q$ in $X$ with $r_L(q)=s$. Then, $p$ and $q$ frame overlap.

If $r_F(p)\in S_2$ then
consider any picture $q$ in $X$ with $r_L(q)= r_F(p)$. Then, $p$ and $q$ frame overlap (and also v-slide overlap).
%\qed
\end{proof}

\begin{example}\label{e-verso-fantastica}
Let $\Sigma=\{ 0, 1\}$. Referring to the sets $S_1$, $S_2$, $S_3$, and $S_4$ in Examples \ref{e-righe-fanta} and
\ref{e-colonne-fanta}, let
 $X(m,n)\subseteq\Sigma^{m,n}$, with $m,n \geq 4$,
be the set of all the pictures with
$R_F(X(m,n))=S_1$, $R_L(X(m,n))=S_2$,
$C_F(X(m,n))=S_3$, and $C_L(X(m,n))=S_4$.
The sets $R_F(X(m,n))$, $R_L(X(m,n))$, $C_F(X(m,n))$, $C_L(X(m,n))$ satisfy the conditions of Theorem \ref{l-non-exp} and therefore
$X(m,n)$ is frame-complete.
%for any picture $p\notin X(m,n)$ there exists a picture $q\in X(m,n)$ such that $p$ and $q$  frame overlap.
%DDDD NON CAPISCO IL SEGUITO ????
%Observe that
%the  sets  % $R_F(X(m,n))$, $R_L(X(m,n))$, $C_F(X(m,n))$, $C_L(X(m,n))$
%$S_1$, $S_2$, $S_3$, and $S_4$  have other nice properties.
%For any $s_1\in S_1$, $s_2\in S_2$, $s_1$ and $s_2$ do not overlap;
%and for any $s_3\in S_3$, $s_4\in S_4$, $s_3$ and $s_4$ do not overlap.
%This property will be useful in the second step of the construction of a NENO set, because it implies that no two pictures in $X(m,n)$ frame overlap.
%Furthermore, the picture in $S_1$ is unbordered. This property implies that no two pictures in $X(m,n)$ may v-slide overlap.
\end{example}

%
%
%The first step toward the construction of a non-expandable non-overlapping set of pictures is therefore the construction of
%a frame-complete set $X$ of pictures. This can be accomplished by defining the frame of such a language following Theorem \ref{l-non-exp},
%and filling the  internal part with any combination of symbols in $\Sigma$.
%Note that Proposition \ref{l-non-exp} does not ensure that the language $X$ is non-overlapping.
%
%The second step consists of selecting in $X$ a subset $Y$ that is a non-overlapping non-expandable set.

The next theorem states some sufficient conditions on a subset of a frame-complete set, so that the subset is NENO.

%The construction then will be completed by imposing some constrains for the the internal part of the pictures.

\begin{theorem}\label{l-non-ov}
Let $X\subseteq \Sigma^{m,n}$ be a frame-complete set.
If a subset $Y$ of $X$ is such that
\begin{enumerate}
\item [a)] $frame(Y)=frame(X)$
\item[b)] $Y$ is non-overlapping
\item[c)] for any $p\in X\setminus Y$  there exists $q\in Y$ such that $p$ and $q$ overlap
\end{enumerate}
then
$Y$ is a NENO set.
\end{theorem}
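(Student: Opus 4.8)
The goal is to verify the two requirements in the definition of NENO for $Y$: that $Y$ is non-overlapping, and that every picture in $\Sigma^{m,n}\setminus Y$ overlaps some picture of $Y$. The first requirement is exactly hypothesis~b), so there is nothing to do there. The whole work is in the second requirement, and the plan is to split $\Sigma^{m,n}\setminus Y$ into two parts: the pictures in $X\setminus Y$ and the pictures in $\Sigma^{m,n}\setminus X$.

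First I would handle $p\in X\setminus Y$: this is precisely hypothesis~c), which hands us a $q\in Y$ with $p$ and $q$ overlapping. Done immediately.

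Next I would handle $p\in\Sigma^{m,n}\setminus X$. Since $X$ is frame-complete, there is a picture $q'\in X$ such that $p$ and $q'$ frame overlap; in particular $p$ and $q'$ overlap. Now $q'$ need not lie in $Y$, so this is not yet enough — and this is the one point that needs care. The key observation is hypothesis~a): $frame(Y)=frame(X)$. A frame overlap between $p$ and $q'$ involves only a single row or a single column of $q'$ (either $r_F(q')$, $r_L(q')$, $c_F(q')$ or $c_L(q')$, by the definition of frame overlap), hence only the frame of $q'$. Since $frame(Y)=frame(X)$, there is a picture $q\in Y$ having that same boundary row or column as $q'$; the frame overlap between $p$ and $q'$ then transfers verbatim to a frame overlap between $p$ and $q$, so $p$ and $q$ overlap with $q\in Y$. (Formally one repeats the four cases $r_F(q')=r_L(p)$ etc., exactly as in the proof of Theorem~\ref{l-non-exp}.) This is the step I expect to be the main obstacle: one must be precise that a frame overlap genuinely depends only on one boundary line of each picture, so that replacing $q'$ by a $Y$-picture sharing that line preserves the overlap. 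Everything else is bookkeeping. Combining the three cases ($p\in X\setminus Y$, $p\in\Sigma^{m,n}\setminus X$, and $Y$ non-overlapping from~b)) gives that $Y$ is NENO.
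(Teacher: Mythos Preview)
Your proposal is correct and follows essentially the same route as the paper: non-overlapping comes from b), and for non-expandability you split $\Sigma^{m,n}\setminus Y$ into $X\setminus Y$ (handled by c)) and $\Sigma^{m,n}\setminus X$ (handled by frame-completeness plus a)). Your treatment of the last step is in fact more precise than the paper's, which writes ``there exists $q\in Y$ such that $frame(q)=frame(x)$'' where strictly only the single relevant boundary row or column of $x$ is guaranteed to be realized in $Y$ by condition a); your observation that a frame overlap depends only on one boundary line of each picture is exactly the right justification.
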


\begin{proof}
%credo sia banale
The set $Y$ is non-overlapping by condition b).
Let us show that $Y$ is non-expandable.
Let $p$ be any picture $p\in\Sigma^{m,n}\setminus Y$, and let us show that there exists $q\in Y$ such that $p$ and $q$ overlap.
If $p\in X$ then condition c) implies the goal.
Suppose $p\notin X$.
Then there exists $x\in X$ such that $p$ and $x$ frame overlap, because $X$ is frame-complete.
Condition a) guarantees that there exists  $q\in Y$ such that $frame(q)=frame(x)$ and hence $p$ and $q$ frame overlap, too.
\end{proof}

Previous Theorems \ref{l-non-exp} and \ref{l-non-ov} suggest a procedure to construct a NENO set.

The construction is accomplished in two main steps. % to first include (and later to exclude) pictures satisfying the specific stated conditions.

The first step %toward the construction of a non-expandable non-overlapping set of pictures is therefore
is the construction of
a frame-complete set $X$ of pictures. This can be accomplished by choosing a quadruple
 $(S_1, S_2, S_3, S_4)$  of frame-compatible string languages, such that
% $X$ be the set
%of all the pictures $p$ with $frame(p)\in S_1 \times S_2\times S_3\times S_4$.
 the pairs $(S_1, S_2)$ and $(S_3, S_4)$
are   cross non-overlapping and full,
and then defining
 the set $X$ of all the pictures $p$ with $frame(p)\in S_1 \times S_2\times S_3\times S_4$.
The set $X$
is frame-complete, applying Theorem \ref{l-non-exp}.

%As already observed, the set $X$ is non-expandable...

The second step consists in selecting  a subset $Y$ of $X$. % that is a non-overlapping non-expandable set.
If the choice of $Y$ is done in a way that it satisfies conditions a), b), and c) in Theorem  \ref{l-non-ov},
then Theorem  \ref{l-non-ov}  ensures that $Y$ is a NENO set.

The next section shows a family of NENO sets obtained by following this construction.

%--------------------------------------------------------------

\section{A family of NENO sets}\label{s-fanta-neno}
%Blablabla..........
%
%NON E' FACILE ESTRARRE UN SOTTINSIEME NENO DA UN FRAME-FULL.
%DA UNA PARTE BISOGNA "TOGLIERE" PICTURES, PER GARANTIRE CHE SIA NON-OVERLAPPING, DALL'ALTRO BISOGNA "AAGIUNGERE" PICTURES PER GARANTIRE LA NON-EXPANDIBILITA'....... TRADE-OFF.......
%
%
%
%L'ESEMPIO SEGUENTE DOSA PERFETTAMENTE E MAGICAMENTE LE COSE PER OTTENERE UN NENO! CLAP CLAP CLAP!
%
In this section we exhibit a family of non-expandable non-overlapping sets.
They are obtained following the construction in Section \ref{s-constr}.
 The first step is accomplished in Example \ref{e-verso-fantastica}, where the family of frame-complete sets $X(m,n)$, for $m, n\geq 4$, is presented.
%METTERE NELL'ESEMPIO 11, CHE m,n >3.
The sets $X(m,n)$ were constructed by taking care of the frames of pictures.
Starting from $X(m,n)$, we will implement the second step of the construction, that is the extraction from $X(m,n)$, of some subset satisfying the %sufficient
conditions
in Theorem \ref{l-non-ov}. This time, the focus in on the internal part of the pictures.

%Let us o
Observe that the second step of the construction needs to balance accurately two opposite operations.
On one side, one has to remove
%it is the removal
 from $X(m,n)$ those pictures that overlap other pictures in the set.
On the other hand, it is necessary not to remove ``too many" pictures, in order to achieve the non-expandability property.
The following family reaches this goal.

\begin{definition}\label{e-fantastica} %(FANTASTICA -TER)
%This continues Example \ref{e-verso-fantastica}......................

Let $\Sigma=\{ 0, 1\}$ and  $m,n\geq 4$.

Let $S_1, S_2\subseteq\Sigma^n$, and  $S_3, S_4\subseteq\Sigma^m$, be the languages
 $S_1=\{ 1^n\}$,
 $S_2=\{ 0w0 \ |\ w\in \{ 0, 1\}^{n-2}\}$,
 $S_3=\{ 110^{m-2}\}$,
 and $S_4=\{ 1w0 \ |\ |w|=m-2,  w\neq 0^{m-2}, 1w0$ with no suffix in $110^+\}$.

Let
 $X(m,n)\subseteq\Sigma^{m,n}$
be the set
of all the pictures with
$R_F(X(m,n))=S_1$, $R_L(X(m,n))=S_2$,
$C_F(X(m,n))=S_3$, and $C_L(X(m,n))=S_4$.

The set $Y(m,n)\subseteq X(m,n)$ is the set of all pictures $p\in X(m,n)$ such that
\begin{enumerate}
\item there exists no $2\leq j\leq n$ such that $p(2,j)=p(3,j)=\cdots =p(m-1,j)=0$
\item there exists no $(i,j)$, with $(i,j)\neq (1,1)$ %(GIUSTO???),
%$2\leq i\leq m$, $2\leq j\leq m$,
such that $p(i,j)=p(i,j+1)=\cdots =p(i,n)=1$ and $p(i,j)p(i+1,j)\cdots p(m,j)\in 110^*$.
\end{enumerate}
\end{definition}

%\begin{example}\label{e-fantastica} %(FANTASTICA -TER)
%%This continues Example \ref{e-verso-fantastica}......................
%
%Let $\Sigma=\{ 0, 1\}$ and  $m,n\geq 4$.
%
%Consider the sets $S_1$, $S_2$, $S_3$, and $S_4$ introduced  in Examples \ref{e-righe-fanta} and
%\ref{e-colonne-fanta}, that is
% $S_1=\{ 1^n\}$,
% $S_2=\{ 0w0 \ |\ w\in \{ 0, 1\}^{n-2}\}$,
% $S_3=\{ 110^{m-2}\}$,
% and $S_4=\{ 1w0 \ |\ |w|=m-2,  w\neq 0^{m-2}, 1w0$ with no suffix in $110^+\}$.
%
%Let
% $X(m,n)\subseteq\Sigma^{m,n}$
%be the set
%introduced  in Example \ref{e-verso-fantastica}
%of all the pictures with
%$R_F(X(m,n))=S_1$, $R_L(X(m,n))=S_2$,
%$C_F(X(m,n))=S_3$, and $C_L(X(m,n))=S_4$.
%The sets $R_F(X(m,n))$, $R_L(X(m,n))$, $C_F(X(m,n))$, and $C_L(X(m,n))$ satisfy the conditions of Proposition \ref{l-non-exp} and therefore
%$X(m,n)$ is frame-full.
%%for any picture $p\notin X$ there exists a picture $x\in X$ such that $p$ and $x$ have a frame-overlap.
%
%Let us now extract a NENO set $Y(m,n)$ from $X(m,n)$, applying Proposition \ref{l-non-ov}.
%
%Define $Y(m,n)\subseteq X(m,n)$ as follows.
%
%$Y(m,n)$ is the set of all pictures $p\in X(m,n)$ such that
%\begin{enumerate}
%\item there exists no $2\leq j\leq n$ such that $p(2,j)=p(3,j)=\cdots =p(m-1,j)=0$
%\item there exists no $(i,j)$, with $(i,j)\neq (1,1)$ %(GIUSTO???),
%%$2\leq i\leq m$, $2\leq j\leq m$,
%such that $p(i,j)=p(i,j+1)=\cdots =p(i,n)=1$ and $p(i,j)p(i+1,j)\cdots p(m,j)\in 110^*$.
%\end{enumerate}
%\end{example}

\begin{figure}
\hspace{10pt}
%\vspace{10pt}
%\begin{center}
\begin{picture}(170,25)(0,0)
%PRIMA FIGURA
\put(2.5,26.5){$1$} \put(7.5,26.5){$1$} \put(12.5,26.5){$\ldots$} \put(17.5,26.5){$1$} \put(22.5,26.5){$1$}
\put(2.5,21.5){$1$}
\put(2.5,16.5){$0$}%MMMM
\put(2.5,11.5){$\vdots$}%MMMM
\put(2.5,6.5){$0$}%MMMM
\put(22.5,14.5){$x$}
\put(2.5,1.5){$0$} \put(12.5,1.5){$w$} \put(22.5,1.5){$0$}
%%%linee orizzontali
\put(0,30){\line(1,0){25}}
\put(0,25){\line(1,0){25}}
\put(0,20){\line(1,0){5}}
\put(0,5){\line(1,0){25}}
\put(0,0){\line(1,0){25}}
%linee verticali
\put(0,0){\line(0,1){30}}
\put(5,0){\line(0,1){30}}
\put(20,0){\line(0,1){30}}
\put(25,0){\line(0,1){30}}

%SECONDA FIGURA
\put(37.5,26.5){$1$} \put(42.5,26.5){$1$} \put(47.5,26.5){$1$} \put(52.5,26.5){$1$} \put(57.5,26.5){$1$}
\put(37.5,21.5){$1$} \put(42.5,21.5){$0$} \put(47.5,21.5){$1$} \put(52.5,21.5){$0$} \put(57.5,21.5){$1$}
\put(37.5,16.5){$0$} \put(42.5,16.5){$0$} \put(47.5,16.5){$1$} \put(52.5,16.5){$0$} \put(57.5,16.5){$1$}
\put(37.5,11.5){$0$} \put(42.5,11.5){$1$} \put(47.5,11.5){$1$} \put(52.5,11.5){$1$} \put(57.5,11.5){$0$}
\put(37.5,6.5){$0$}  \put(42.5,6.5){$1$} \put(47.5,6.5){$1$} \put(52.5,6.5){$0$} \put(57.5,6.5){$1$}
\put(37.5,1.5){$0$}  \put(42.5,1.5){$1$} \put(47.5,1.5){$0$} \put(52.5,1.5){$0$} \put(57.5,1.5){$0$}
%%%linee orizzontali
\put(35,30){\line(1,0){25}}
\put(35,25){\line(1,0){25}}
\put(35,20){\line(1,0){25}}
\put(35,15){\line(1,0){25}}
\put(35,10){\line(1,0){25}}
\put(35,5){\line(1,0){25}}
\put(35,0){\line(1,0){25}}
%linee verticali
\put(35,0){\line(0,1){30}}
\put(40,0){\line(0,1){30}}
\put(45,0){\line(0,1){30}}
\put(50,0){\line(0,1){30}}
\put(55,0){\line(0,1){30}}
\put(60,0){\line(0,1){30}}

%TERZA FIGURA
\put(72.5,26.5){$1$} \put(77.5,26.5){$1$} \put(82.5,26.5){$1$} \put(87.5,26.5){$1$} \put(92.5,26.5){$1$}
\put(72.5,21.5){$1$}                      \put(82.5,21.5){$0$}                      \put(92.5,21.5){$1$}
\put(72.5,16.5){$0$}                      \put(82.5,16.5){$0$}                      \put(92.5,16.5){$1$}
\put(72.5,11.5){$0$}                      \put(82.5,11.5){$0$}                      \put(92.5,11.5){$0$}
\put(72.5,6.5){$0$}                       \put(82.5,6.5){$0$}                      \put(92.5,6.5){$1$}
\put(72.5,1.5){$0$}  \put(77.5,1.5){$1$}  \put(82.5,1.5){$0$}  \put(87.5,1.5){$0$}  \put(92.5,1.5){$0$}
%%%linee orizzontali
\put(70,30){\line(1,0){25}}
\put(70,25){\line(1,0){25}}
\put(70,20){\line(1,0){25}}
\put(70,15){\line(1,0){25}}
\put(70,10){\line(1,0){25}}
\put(70,5){\line(1,0){25}}
\put(70,0){\line(1,0){25}}
%linee verticali
\put(70,0){\line(0,1){30}}
\put(75,0){\line(0,1){30}}
\put(80,0){\line(0,1){30}}
\put(85,0){\line(0,1){30}}
\put(90,0){\line(0,1){30}}
\put(95,0){\line(0,1){30}}

%QUARTA FIGURA
\put(107.5,26.5){$1$} \put(112.5,26.5){$1$} \put(117.5,26.5){$1$} \put(122.5,26.5){$1$} \put(127.5,26.5){$1$}
\put(107.5,21.5){$1$}                                                                   \put(127.5,21.5){$1$}
\put(107.5,16.5){$0$}                       \put(117.5,16.5){$1$} \put(122.5,16.5){$1$} \put(127.5,16.5){$1$}
\put(107.5,11.5){$0$}                       \put(117.5,11.5){$1$}                       \put(127.5,11.5){$0$}
\put(107.5,6.5){$0$}                        \put(117.5,6.5){$0$}                        \put(127.5,6.5){$1$}
\put(107.5,1.5){$0$}  \put(112.5,1.5){$1$}  \put(117.5,1.5){$0$}  \put(122.5,1.5){$0$}  \put(127.5,1.5){$0$}
%%%linee orizzontali
\put(105,30){\line(1,0){25}}
\put(105,25){\line(1,0){25}}
\put(105,20){\line(1,0){25}}
\put(105,15){\line(1,0){25}}
\put(105,10){\line(1,0){25}}
\put(105,5){\line(1,0){25}}
\put(105,0){\line(1,0){25}}
%linee verticali
\put(105,0){\line(0,1){30}}
\put(110,0){\line(0,1){30}}
\put(115,0){\line(0,1){30}}
\put(120,0){\line(0,1){30}}
\put(125,0){\line(0,1){30}}
\put(130,0){\line(0,1){30}}

\end{picture}
%\end{center}
\caption{}
\label{fig-fantastica}
\end{figure}
%\end{center}

Figure \ref{fig-fantastica} shows %, from left to right,
 a generic picture in $X(m,n)$, a picture in $Y(m,n)$, and two pictures which are in $X(m,n)$, but are not in $Y(m,n)$ because they violate condition 1 and 2, respectively.

Note that  $X(m,n)\subseteq\Sigma^{m,n}$ in the definition
is the frame-complete set introduced  in Example \ref{e-verso-fantastica}.
The set $Y(m,n)$ has been extracted from $X(m,n)$ in such a way that it  satisfies the sufficient conditions in Theorem \ref{l-non-ov}.
More precisely, the conditions 1 and 2 in the definition of $Y(m,n)$ are designed in order to avoid the overlaps between two pictures in $Y(m,n)$.
In particular, condition 1 avoids bl-overlaps.
The bl-corner of a picture $p$ in $X(m,n)$ cannot be placed inside another picture $p'$ of $X(m,n)$, unless when it is placed on the $(m-1)$-th row (in all the other cases the $0$'s in the first column of $p$ would meet the $1$'s in the first row of $p'$).
Condition 1 avoids  this possibility in the pictures in $Y(m,n)$.
In a similar way, condition 2 forbids tl-overlaps.
On the contrary, the pictures in $X(m,n)$ which do not satisfy conditions 1 and 2, i. e. the pictures in $X(m,n)\setminus Y(m,n)$, will necessarily overlap some pictures in $Y(m,n)$.

The following theorem shows, in a more accurate way, that $Y(m,n)$ satisfies the conditions a), b), and c) of Theorem \ref{l-non-ov}, and then $Y(m,n)$ is a NENO set. %, for any $m, n\geq 4$.

\begin{theorem}\label{p-fanta-e-neno}
The language $Y(m,n)$ in Definition \ref{e-fantastica} is a NENO set, for any $m,n\geq 4$.
\end{theorem}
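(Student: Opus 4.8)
The plan is to verify that $Y(m,n)$ satisfies the three conditions a), b), c) of Theorem~\ref{l-non-ov}, using the fact (established in Example~\ref{e-verso-fantastica}) that $X(m,n)$ is frame-complete. Condition a), namely $frame(Y(m,n))=frame(X(m,n))$, is the easy part: one needs to check that every frame in $S_1\times S_2\times S_3\times S_4$ is realized by at least one picture satisfying conditions 1 and 2 of Definition~\ref{e-fantastica}. I would exhibit a canonical interior completion — for instance, filling the block $p[(2,2),(m-1,n-1)]$ with all $1$'s (or a suitable pattern) — and check that conditions 1 and 2 are not violated: condition 1 fails only if some interior column is all-$0$ from row $2$ to row $m-1$, which is avoided by having $1$'s there; condition 2 concerns rows that are all-$1$ to the right together with a column pattern in $110^*$ below, and with an all-$1$ interior one checks the column-below-condition $p(i,j)\cdots p(m,j)\in 110^*$ cannot hold for $i\ge 2$ because the entries are $1$'s, not of the form $110^*$ unless forced by the frame; here one must be slightly careful about rows touching the last column $c_L(p)\in S_4$, but the definition of $S_4$ (no suffix in $110^+$) is exactly what rules this out.

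For condition b), that $Y(m,n)$ is non-overlapping, I would argue separately that no two pictures $p,q\in Y(m,n)$ properly tl-overlap and that none properly bl-overlap; frame overlaps are already excluded since $Y(m,n)\subseteq X(m,n)$ and $X(m,n)$ is frame-complete (hence frame-complete sets have no frame overlaps among their members). For a bl-overlap, the bl-corner of $p$ sits at some position $(i,1)$ of $q$ with $i<m$, i.e. $p$'s first column (which lies in $S_3=\{110^{m-2}\}$, so starts $1,1,0,\dots,0$) must match the portion of $q$'s first column from row $i$ down, while simultaneously $p$'s last row (in $S_2$, starting and ending with $0$) sits inside $q$; the mismatch is that $q$'s first column also lies in $S_3$ so $q(1,1)=q(2,1)=1$, and these $1$'s would have to coincide with $0$-entries of $p$'s first column unless $i=m-1$ — and condition~1 on $q$ is precisely what forbids the $i=m-1$ case, since it would force an all-$0$ interior column in $q$. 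The tl-overlap case is dual: placing the tl-corner of $p$ at position $(i,j)$ of $q$ with $(i,j)\neq(1,1)$ forces $p$'s first row $1^n$ to appear as an all-$1$ horizontal run in $q$ ending at column $n$, and $p$'s first column $110^{m-2}$ to appear as a $110^*$ vertical run below position $(i,j)$; this is exactly the configuration forbidden by condition~2 of $q$. I expect this to be the most delicate part — one must handle all four corners, the various sub-cases of whether the overlap touches the border rows/columns of the larger picture, and check that conditions 1 and 2 genuinely cover every case; the paper's informal remarks before the theorem indicate this is where the design of the conditions pays off.

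For condition c), that every $p\in X(m,n)\setminus Y(m,n)$ overlaps some $q\in Y(m,n)$, I would split according to which condition $p$ violates. If $p$ violates condition~1, there is an interior column $j$ all-$0$ from row $2$ to row $m-1$; since $p(1,j)$ is part of the top row $1^n$ and $p(m,j)$ is part of the bottom row in $S_2$ (so $p(m,j)\in\{0,1\}$), the column $c$ of $p$ at index $j$ has the form $1 0^{m-2} b$ with $b\in\{0,1\}$, and this creates a bl- or tl-overlap with a picture $q\in Y(m,n)$ whose first column is $110^{m-2}$: place $p$ so that its bl-corner (bottom of column $j$) sits at row $m-1$ of $q$, matching $q$'s first column $110^{m-2}$ against $p$'s column $j$ read appropriately. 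If $p$ violates condition~2, there is a position $(i,j)\neq(1,1)$ with an all-$1$ run to the right and a $110^*$ run downward, and this is precisely a tl-prefix-shaped configuration allowing $p$ to tl-overlap a suitable $q\in Y(m,n)$ (one whose top row is $1^n$ and whose first column is $110^{m-2}$). In each case I would explicitly name the witness $q$ — using the canonical interior completion from the proof of condition a) to guarantee $q\in Y(m,n)$ — and compute the overlap size. Once a), b), c) are verified, Theorem~\ref{l-non-ov} immediately gives that $Y(m,n)$ is a NENO set, completing the proof. The main obstacle throughout is bookkeeping: making sure the matching of frame strings ($1^n$, $0w0$, $110^{m-2}$, $1w0$ with no $110^+$ suffix) across overlapping positions is handled for every corner and every boundary sub-case, since the correctness of conditions 1 and 2 rests on exactly these frame shapes.
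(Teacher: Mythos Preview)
Your overall plan is the paper's: verify conditions a), b), c) of Theorem~\ref{l-non-ov} for $Y(m,n)\subseteq X(m,n)$, using condition~2 to block tl-overlaps and condition~1 to block bl-overlaps. Your tl-overlap argument is correct and in fact more explicit than the paper's (which leaves that sub-case essentially unwritten). Two pieces of your sketch, however, do not go through as stated.

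First, the ``all-$1$ interior'' completion for a) can fail. Take $m=n=4$, $r_L(p)=0100$, $c_L(p)=1010$ (both admissible in $S_2$, $S_4$); with the interior filled by $1$'s, position $(i,j)=(3,2)$ violates condition~2 since $p(3,2)p(3,3)p(3,4)=111$ and $p(3,2)p(4,2)=11\in 110^*$. The paper avoids this by starting from an arbitrary $p\in X(m,n)$ and flipping interior symbols one at a time to remove each violation of condition~1 or~2, checking that no new violation is introduced. Second, your geometry for the bl-overlap is wrong: if $x$ is a bl-prefix of $p$ and a tr-prefix of $q$ of size $(h,k)$, then the first column of $p$ aligns with column $n-k+1$ of $q$, not with column~$1$ of $q$. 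The correct chain is that $r_F(q)=1^n$ forces $p(m-h+1,1)=1$, whence $h=m-1$ (as $c_F(p)=110^{m-2}$), and then $q(2,n-k+1)=\cdots=q(m-1,n-k+1)=0$, contradicting condition~1 of $q$ at $j=n-k+1\ge 2$. Your conclusion is right but the column is misidentified, and the same slip recurs in your part~c): the witness $q\in Y(m,n)$ should be built so that its bl-prefix of size $(m-1,n-j+1)$ equals the tr-prefix of $p$, with no role for the bl-corner of~$p$.
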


\begin{proof}
Let $X(m,n)$ and $Y(m,n)$ be the languages defined in Definition \ref{e-fantastica}.
The set $X(m,n)$ is frame-complete, as shown in Example \ref{e-verso-fantastica}.
Let us show that $Y(m,n)$ satisfies the conditions a), b), and c), of Theorem \ref{l-non-ov}.

\begin{enumerate}
\item [a)]  {\em $frame(Y(m,n))=frame(X(m,n))$ }

Since $Y(m,n)\subseteq X(m,n)$ then $frame(Y(m,n))\subseteq frame(X(m,n))$.
Let us show the inverse inclusion, namely, that for any $p\in X(m,n)$ there exists $q\in Y(m,n)$ and $frame(q)=frame(p)$.
It is trivial if $p\in Y(m,n)$. Suppose $p\in X(m,n)\setminus Y(m,n)$.
Then, $p$ does not satisfy condition 1 or condition 2 (or both) in Definition \ref{e-fantastica}.
It is possible to obtain a picture $q\in Y(m,n)$ with $frame(q)=frame(p)$, by exchanging some symbols in $p$.
More precisely,
%If there are some columns in $p$  that do not satisfy condition 1,
for any $1\leq j\leq n$ such that $p(2,j)=p(3,j)=\cdots =p(m-1,j)=0$,
one can  replace with $1$ one occurrence  of $0$ among $p(3,j),\cdots , p(m-1,j)$.
This replacement reduces the violations of condition 1 and does not introduce any violation to condition 2.

Suppose now that there exists $(i,j)$, with $(i,j)\neq (1,1)$ %(GIUSTO???),
such that $p(i,j)=p(i,j+1)=\cdots =p(i,n)=1$ and $p(i,j)p(i+1,j)\cdots p(m,j)\in 110^*$.
Recalling that $m,n\geq 4$, there always exists one position among
$(i,j)$, $(i,j+1), \cdots, (i,n)$ and $(i+1,j), \cdots , (m,j)$
 that is not on the frame of $p$.
Exchanging the symbol in such position removes the violation
and does not introduce any violation to condition 1.

\item[b)] {\em $Y(m,n)$ is non-overlapping.}

Consider $p, q \in Y(m,n)$.
%DOBBIAMO ADEGUARLA AL FATTO CHE SAPPIAMO GIA' CHE NON SI FRAME-OVERLAPPANO PERCHE'
%LE CORNICI SONO CROSS-NON-OVERLAPPING
The pictures $p$ and $q$ cannot frame overlap, because the pairs
$(R_F(X(m,n))$, $R_L(X(m,n)))$, $(C_F(X(m,n))$, $C_L(X(m,n)))$ are cross-non-overlapping (as shown in Examples \ref{e-righe-fanta} and
\ref{e-colonne-fanta}).

The pictures $p$ and $q$ cannot h-slide overlap, because otherwise it would exist a position $(1,j)$ with $(i,j)\neq (1,1)$ violating condition 2.

The pictures $p$ and $q$ cannot v-slide overlap because the first column of any picture in $Y(m,n)$ is unbordered.

Moreover, $p$ and $q$ cannot tl-overlap with an overlap of size  $(r,c)$ with $1 < r\leq m-1$
and $1 < c\leq  n-1$.
Suppose by the contrary that there exists $x$ of size $(r,c)$ with $1 < r\leq m-1$
and $1 < c\leq  n-1$, such that $x$ is a tl-prefix of $p$ and a br-prefix of $q$.

%If $r, c\neq 1$ then $q$ violates condition 2 in position $(m-r+1, n-c+1)$.
%The case $r=1$ is not possible  since $r_L(q)$ and $r_F(p)$ do not overlap.
%The case $c=1$ is not possible since $c_L(q)$ has no suffix in $110^+$.

Finally, $p$ and $q$ cannot bl-overlap with an overlap of size  $(r,c)$ with $1 < r\leq m-1$
and $1 < c\leq  n-1$.
Suppose by the contrary that there exists $x$ of size $(r,c)$ with $1 < r\leq m-1$
and $1 < c\leq  n-1$, such that $x$ is a bl-prefix of $p$ and a tr-prefix of $q$.

Since $r_F(q)=1^n$, $r$ cannot be strictly less than $m-1$.
On the other hand, if $r=m-1$ then the picture $q$ violates condition 1 in the column $j=n-c+1$.

\item[c)] {\em For any $p\in X(m,n)\setminus Y(m,n)$  there exists $q\in Y(m,n)$ such that $p$ and $q$ overlap.}

 If $p\in X(m,n)\setminus Y(m,n)$ then  $frame(p)\in S_1 \times S_2\times S_3\times S_4$, but
$p$ does not satisfy condition 1 or condition 2 (or both) in Definition \ref{e-fantastica}.

If $p$ does not satisfy condition 1 then let $j$ be the greatest index $2\leq j\leq n$ such that $p(2,j)=p(3,j)=\cdots = p(m-1,j)=0$.
Then there exists $q\in Y(m,n)$ with the bl-prefix of $q$ of size $(m-1, n-j+1)$ equal to the tr-prefix of $p$ of size $(m-1, n-j+1)$.
Note that since $j\neq 1$  the last column of $q$ can be constructed so that no violation of condition 2 appears in $q$, by inserting some occurrences of $0$'s where necessary. Furthermore, such $0$'s are not necessary in each position $q(2,n)$, $q(3,n)$, $\cdots, q(m-1,n)$ since the last column of $p$ cannot be in $1^+0^+$.

If $p$ satisfies condition 1, but not condition 2, % it does not satisfy condition 2
then
let $(i,j)$ the lowest among the rightmost positions
such that $p(i,j)=p(i,j+1)=\cdots =p(i,n)=1$ and $p(i,j)p(i+1,j)\cdots p(m,j)\in 110^*$.
Then there exists $q\in Y(m,n)$ with the tl-prefix of $q$ of size $(m-i+1, n-j+1)$ equal to the br-prefix of $p$ of size $(m-i+1, n-j+1)$.
\end{enumerate}
\end{proof}

%\begin{remark} % NON SO: POTREBBE FAR PENSARE CHE LA NOSTRA COSTRUZIONE SIA DIFFICILE DA APPLICARE! CHE E' QUASI ....UN CASO CHE FUNZIONI PER FANTASTICA!
We conclude the section with some remarks. Following the construction presented in Section \ref{s-constr}, for any $m,n\geq 4$,
we have extracted from the frame-complete set $X(m,n)$ a NENO set $Y(m,n)$, by imposing two conditions on the internal part of its pictures.
Such conditions are designed so that the pictures in $X(m,n)$ that satisfy them cannot overlap each other, whereas as soon as a picture in $X(m,n)$ does not satisfy a condition, it necessarily overlaps a picture in $Y(m,n)$.
Let us emphasize that not any conceivable %/ valid/admissible/plausible
 condition which avoids any overlap when satisfied, will necessarily imply a desired overlap when it is not satisfied.

As an example, consider the language $M(m,n)\subseteq X(m,n)$, defined by replacing in the definition of $Y(m,n)$, condition 1 with the following condition 1 bis.
\begin{description}
\item[1bis.] if  there exists  $(i,j)$, with $(i,j)\neq (1,n)$
such that $p(i,1)=p(i,2)=\cdots =p(i,j)=1$ then  $p(i,j)p(i+1,j)\cdots p(m,j)$    has a suffix in $110^*$.
\end{description}
%\vspace*{-1cm}
Condition 1bis seems to avoid bl-overlaps in a similar way as condition 1 does. Nevertheless,
one can show that $M(m,n)$ is non-overlapping,  but it is not non-expandable.
A counter-example, for $m=n=5$, is the  picture %here given for $m=n=5$
$p=
\begin{array}{|ccccc|} \hline
1& 1& 1 & 1 & 1\\
1& 0& 1 & 1 & 1\\
0&  1& 0 & 1 & 0\\
0&  1& 0 & 0 & 1\\
0&  1& 0 & 1 & 0\\
\hline
\end{array}\, .$
The picture $p$ belongs to $X(m,n)\setminus M(m,n)$, but there is no picture $q\in X$ such that $p$ and $q$ overlap.

%
%\begin{remark}
%The NENO sets $Y(m,n)$ have been extracted from the frame-complete sets $X(m,n)$, for any $m,n\geq 4$.
%One can ask the question whether there can be other families of NENO sets which can be extracted from $X(m,n)$.
%%""In linea di principio SI. Infatti.............."
%Actually, for a fixed size $(m,n)$, another NENO set $Y'(m,n)$ exists if there is an unbordered in $X\setminus Y$. It
%can be eventually extracted from  $X(m,n)$ in the following way.
%Take an unbordered picture $p\in X(m,n)\setminus Y(m,n)$, if any. The set $P=\{ p \}$ is non-overlapping.
%While $P$ is not non-expandable, add to $P$ all the unbordered pictures in $X(m,n)\setminus Y(m,n)$ which do not overlap the pictures in $P$.
%
%Note that, in the opposite to the construction in Section \ref{s-constr}, this procedure does not provide a general description of such sets $Y'(m,n)$, when $m, n$ vary.
%\end{remark}

%--------------------------------------------------------------------

\section{Cardinality of non-overlapping sets of pictures}\label{s-counting}

%Riusciamo a valutare la cardinalita' delle famiglie? Forse si per synga-plus, ma se abbiamo la condizione $n\geq 2k+1$ o simile, diventa piu' difficile.
%
%Ha senso se non abbiamo un limite superiore?

In the previous sections we have discussed the construction of NENO sets and presented a family of NENO sets.
%The question is how big is this
We consider now the question of how big is this family and how big a NENO set can be.
The question has been extensively investigated in the string case.
Let $C(n,q)$ denote the maximum size of a cross-bifix-free code of strings of length $n$ over an alphabet of cardinality $q$.
%"I Singaporesi" have shown in ..... that
In \cite{CKPW13},
it is shown that
$C(n,q)\leq \frac{q^n}{2n-1}.$
The non-expandable cross-bifix-free sets of strings  introduced in \cite{Bajic14,BPP12,CKPW13} have been compared with this bound.

Let us apply the upper bound on strings, in order to obtain a simple upper bound on the size of a non-overlapping set of pictures.
Let $C(m,n,q)$ denote the maximum size of a non-overlapping set of pictures of size $(m,n)$ over an alphabet of cardinality $q$.

\begin{theorem}
Let $m,n, q$ be integers and $N=\max\{m,n\}$. Then
\\
$$C(m,n,q)\leq \frac{q^{mn}}{2N-1}.$$
\end{theorem}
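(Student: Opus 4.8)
The plan is to reduce the two‑dimensional bound to the known one‑dimensional bound $C(n,q)\leq q^n/(2n-1)$ of \cite{CKPW13} by regarding pictures as strings over a larger alphabet. First I would assume without loss of generality that $N=n$, i.e.\ $n\geq m$; the case $m>n$ is handled symmetrically, exchanging the roles of rows and columns. Let $X\subseteq\Sigma^{m,n}$ be a non-overlapping set realizing the maximum $C(m,n,q)$. As recalled at the end of Section~\ref{ss-2d-notation}, each picture of $\Sigma^{m,n}$ can be identified with a string of length $n$ over the alphabet $\Sigma^m$ of its columns, which has cardinality $q^m$, and distinct pictures give distinct strings; write $\widehat X$ for the resulting set of column‑strings, so that $|X|=|\widehat X|$.

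The crucial step is to verify that $\widehat X$ is a cross-bifix-free code over $\Sigma^m$. Suppose it is not: then there are $p,q\in X$ and a nonempty column‑string $x$ of length $k$, with $1\leq k\leq n-1$, that is a prefix of the column‑string of $p$ and a suffix of the column‑string of $q$ (possibly $p=q$). Reading this back in two dimensions, $x$ is a subpicture of size $(m,k)$ occurring in $p$ at position $(1,1)$ and in $q$ at position $(1,n-k+1)$; since $x$ has full height $m$, this means $x$ is a tl-prefix of $p$ and a br-prefix of $q$. Because $k\leq n-1$ we have $x\neq p$ and $x\neq q$, so $p$ and $q$ tl-overlap and hence properly overlap, contradicting that $X$ is non-overlapping. (Taking $p=q$ in this argument also re‑proves that every picture of $X$, equivalently every string of $\widehat X$, is unbordered, so the usual convention that a cross-bifix-free code is in particular bifix-free is respected.)

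It then remains only to invoke the string bound. Since $\widehat X$ is a cross-bifix-free code of strings of length $n$ over an alphabet of size $q^m$, the bound of \cite{CKPW13} applied with this alphabet gives
$$C(m,n,q)=|\widehat X|\leq C(n,q^m)\leq \frac{(q^m)^n}{2n-1}=\frac{q^{mn}}{2N-1}.$$
When $m>n$, the same reasoning applied to the set of row‑strings of length $m$ over $\Sigma^n$ yields $C(m,n,q)\leq C(m,q^n)\leq q^{mn}/(2m-1)=q^{mn}/(2N-1)$. The only genuinely delicate point in the whole argument is the dictionary in the middle paragraph between string overlaps of column‑strings and (h-slide, proper) picture overlaps; once that correspondence is set up, the statement is an immediate consequence of the one‑dimensional bound, with the larger of the two dimensions playing the role of the string length.
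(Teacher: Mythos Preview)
Your proof is correct and follows essentially the same approach as the paper: view a non-overlapping set of $(m,n)$ pictures as strings of length $n$ over the column alphabet $\Sigma^m$ (respectively length $m$ over $\Sigma^n$), observe that a string overlap would yield an h-slide (hence proper) picture overlap, and then invoke the one-dimensional bound $C(n,q)\leq q^n/(2n-1)$ from \cite{CKPW13}. Your write-up is in fact a bit more careful than the paper's, since you explicitly check both the cross-bifix condition and the bifix-free condition for the column-string set, and you spell out why the resulting picture overlap is proper.
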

\begin{proof}
Let $X\subseteq \Sigma^{m,n}$ and $|\Sigma|=q$.
Note that
%any set $X\subseteq \Sigma^{m,n}$ of pictures of fixed size $(m,n)$
$X$ can be viewed as a set $col(X)$ of strings of length $n$ on the alphabet of the columns $\Sigma^m$, and  $|\Sigma^m| =q^m$. %   has cardinality $q^m$.
The key observation is that if $X$ is non-overlapping then $col(X)$ is cross-bifix-free.
Actually, an overlapping of two strings in $col(X)$ would give a h-slide-overlap of two pictures in $X$.
Therefore, applying the upper bound given in \cite{CKPW13} on the cardinality of the cross-bifix-free sets of strings,
$|X|\leq \frac{ (q^{m})^n} {2n-1}.$
In an analogous manner,
$X$ can be viewed as a set $row(X)$ of strings of length $m$ on the alphabet of the rows $\Sigma^n$,  and  $|\Sigma^n| =q^n$.
Hence,
$|X|\leq \frac{ (q^{n})^m} {2m-1}.$
Finally,
$|X|\leq \min \{ \frac{ q^{mn}} {2n-1} , \frac{ q^{mn}} {2m-1} \} = \frac{ q^{mn}} {2N-1}.$
\end{proof}

%LOWER BOUND PER FANTASTICA

Let us now evaluate the cardinality of the NENO sets $Y(m,n)$ introduced in %the previous Section
Definition \ref{e-fantastica}.
We are going to present a lower bound on the cardinality of $Y(m,n)$.

Let $\Sigma=\{ 0, 1\}$.
Define the set $Z(m,n)$ of all the pictures $z\in \Sigma^{m,n}$ with
 $frame(z)\in frame(Y(m,n))$
%S_1 \times S_2\times S_3\times S_4$
and such that
\begin{description}
\item[1.] there exists no $2\leq j\leq n$ such that $p(2,j)=p(3,j)=\cdots =p(m-1,j)=0$
\item[2a.] there exists no $(i,j)$, with $(i,j)\neq (1,1)$ %(GIUSTO???),
%$2\leq i\leq m$, $2\leq j\leq m$,
such that
%$p(i,j)=p(i,j+1)=\cdots p(i,n)=1$ and
$p(i,j)p(i+1,j)\cdots p(m,j)\in 110^*$.
\end{description}

Observe that
$Z(m,n)\subseteq Y(m,n)$, since the condition 2a implies condition 2 in the definition of $Y(m,n)$.
Therefore, $|Y(m,n)| \geq |Z(m,n)|$.
In the next theorem we will evaluate $|Z(m,n)|$, obtaining a lower bound on $|Y(m,n)|$.

\begin{theorem}
Let $Y(m,n)$ the NENO set in Definition \ref{e-fantastica}.
Then
 $$|Y(m,n)| \geq \bigl( 2^{m-2} -1 \bigr ) ^{n-2} \cdot \  2^{m-3}$$
%{\small
%$\frac{1}{4} \bigl( F_k(n-k-2) + F_k(n-k-3) \bigr)^2 \biggl( \frac{1}{2} \bigl( F_k(n-k-2) + F_k(n-k-3) \bigr) + F_k(n-1) \biggr)^{m-2} (1-\frac{1}{2^m})$.
%}
\end{theorem}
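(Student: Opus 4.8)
The plan is to establish the bound through the inclusion $Z(m,n)\subseteq Y(m,n)$ and a column‑by‑column count of $|Z(m,n)|$. The key structural observation is that both conditions defining $Z(m,n)$ are \emph{local to a single column}: condition~1 constrains only the entries $p(2,j),\dots,p(m-1,j)$ of the column of index $j$, and condition~2a constrains only the suffixes $p(i,j)p(i+1,j)\cdots p(m,j)$ of that same column. The frame requirement $frame(z)\in frame(Y(m,n))$ likewise decomposes column‑wise: the first‑row constraint $r_F(z)=1^n$ says exactly that the top entry of every column equals $1$; $c_F(z)=110^{m-2}$ fixes column $1$; $c_L(z)\in S_4$ determines which words are allowed in column $n$; and the last‑row constraint $r_L(z)\in S_2=\{0w0\}$ only forces the bottom entries of columns $1$ and $n$ to be $0$, which already follows from the previous two. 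Consequently $z$ belongs to $Z(m,n)$ if and only if each of its $n$ columns is chosen independently among the admissible columns for its position, so $|Z(m,n)|=N_1\cdot N_{\mathrm{mid}}^{\,n-2}\cdot N_n$, where $N_1$, $N_{\mathrm{mid}}$ and $N_n$ count the admissible columns at position $1$, at any position $2\le j\le n-1$, and at position $n$.

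I would then compute these three numbers. Clearly $N_1=1$: column $1$ must be the unique word $110^{m-2}$ of $S_3$, and this word satisfies conditions~1 and~2a (its only suffix lying in $110^*$ is the whole word, whose only occurrence as a column‑suffix is anchored at the excluded corner $(1,1)$). For an interior position the admissible columns are the words $u=u_1\cdots u_m\in\{0,1\}^m$ with $u_1=1$, with $(u_2,\dots,u_{m-1})\neq 0^{m-2}$, and with no suffix in $110^*$. I would classify them by the position $\ell$ of the \emph{last} $1$ in $u$: a suffix in $110^*$ is exactly a block ``$11$ followed by $0$'s down to row $m$'', so ``no suffix in $110^*$'' is equivalent to ``$\ell\le 1$ or $u_{\ell-1}=0$''. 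Together with $u_1=1$ this leaves, for each admissible value of $\ell$, a free choice of the bits strictly between position $1$ and the forced $0$ at position $\ell-1$ (with an extra non‑triviality constraint when $\ell=m$); summing the resulting powers of two over the admissible $\ell$ is a short geometric computation and yields $N_{\mathrm{mid}}\ge 2^{m-2}-1$. A completely analogous count for column $n$ — now using that its bottom entry is $0$, so that ``no suffix in $110^*$'' coincides with the membership condition of $S_4$ — gives $N_n=|S_4|\ge 2^{m-3}$.

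Combining, $|Y(m,n)|\ge|Z(m,n)|=N_1\cdot N_{\mathrm{mid}}^{\,n-2}\cdot N_n\ge 1\cdot(2^{m-2}-1)^{n-2}\cdot 2^{m-3}$, which is the stated bound. The main obstacle is the column count itself: one has to treat carefully the boundary subcases — the last $1$ of the column sitting in row $m$ versus strictly higher, the interaction between the ``no suffix in $110^*$'' condition and the exclusion of the all‑zero interior, and the use of $m,n\ge 4$ to guarantee enough free positions — since the naive shortcut (``take any non‑zero interior and then fix the bottom bit'') does not work: not every non‑zero interior can be completed to an admissible column. Getting the geometric sums and these edge cases exactly right is where essentially all the work lies; the reduction to a per‑column count and the final multiplication are then routine.
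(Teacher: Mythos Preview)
Your proposal follows essentially the same route as the paper: pass to the subset $Z(m,n)\subseteq Y(m,n)$, observe that the frame condition together with conditions~1 and~2a factor column by column, and then count the admissible interior columns and the admissible last column separately. The paper names these column sets $I(m)$ and $L(m)$ and evaluates their cardinalities by subtracting the forbidden patterns from $2^{m-1}$ (respectively $2^{m-2}$), obtaining the same values $2^{m-2}-1$ and $2^{m-3}$ that you claim; your classification by the position $\ell$ of the last $1$ is just a minor variant of that same count.
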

\begin{proof}
Let $\Sigma=\{ 0,1\}$ and $Z(m,n)\subseteq \Sigma^{m,n}$ be the set defined above.
We prove that $|Z(m,n)|= \bigl( 2^{m-2} -1 \bigr ) ^{n-2} \cdot \  2^{m-3}$. This shows the statement since $Z(m,n)\subseteq Y(m,n)$ .

Define the sets
$I(m), L(m)\subseteq \Sigma^m$ as follows.

$I(m)=\{  1y a \ | \ a\in\Sigma, y\neq 0^{m-2}$ and $1ya$ with no suffix in $110^* \}$

$L(m)=\{  1x0 \ | \  x\neq 0^{m-2}$ and $1x0$ with no suffix in $110^+ \}$.

Consider any picture $z\in Z(m,n)$.
The first column of $z$ is $110^{m-2}$;
the internal columns of $z$, that is the columns with index $2\leq j\leq n-1$, are all strings in $I(m)$, while
%strings $1y a$, with $a\in\Sigma$, $y\neq 0^{m-2}$ and with no suffix in $110^*$.
the last column of $z$ is a string in $L(m)$.
% strings $1x 0$, with $x\neq 0^{m-2}$ and with no suffix in $110^+$.

Let us show that $|I(m)| = 2^{m-2}-1$.
The strings $1ya$ of length $m$ with a suffix in $110^+$ are $\sum_{i=0}^{m-3} 2^i = 2^{m-2}-1$, while
the  strings $1ya$ of length $m$ with $y=0^{m-2}$ are $2$.
Therefore,
$|I(m)|= 2^{m-1} - \bigl( 2^{m-2}-1 \bigr) - 2 = 2^{m-2}-1$.

Let us show that $|L(m)| = 2^{m-3}$.
The strings $1x0$ of length $m$ with a suffix in $110^*$ are $\sum_{i=0}^{m-4} 2^i = 2^{m-3}-1$, while
the  strings $1x0$ of length $m$ with $x=0^{m-2}$ is just $1$.
Therefore,
$|L(m)|= 2^{m-2} - \bigl( 2^{m-3}-1 \bigr) - 1 = 2^{m-3}$.

Any picture $z\in Z(m,n)$ is just the catenation of the column $110^{m-2}$ with $n-2$ columns in $I(m)$ and then a column in $L(m)$, with no other constrain.
Hence
 $|Z(m,n)|= |I(m)|^{n-2} \cdot |L(m)|=
\bigl( 2^{m-2} -1 \bigr ) ^{n-2} \cdot \  2^{m-3}$.
\end{proof}

\bibliographystyle{splncs03}
\bibliography{bibliography2}

\end{document}